\newtheorem{theorem}{Theorem}%[section]
\newtheorem{lemma}[theorem]{Lemma}
\newtheorem{definition}[theorem]{Definition}
\DeclareMathOperator{\mbr}{\mathsf{MBR}} %
\DeclareMathOperator{\opt}{\mathsf{opt}} %
\DeclareMathOperator{\pess}{\mathsf{pess}} %
\DeclareMathOperator{\NE}{\mathtt{NE}} %
\DeclareMathOperator{\NW}{\mathtt{NW}} %
\DeclareMathOperator{\SE}{\mathtt{SE}} %
\DeclareMathOperator{\SW}{\mathtt{SW}} %
\title{Linear Separability in Spatial Databases}
\author{
Gilberto Guti\'errez\thanks{Computer Science and Information Technologies Department, Universidad del B\'io-B\'io, Chill\'an, Chile.
{\tt ggutierr@ubiobio.cl; ctorresf@alumnos.ubiobio.cl}.} 
\and
Pablo P\'erez-Lantero\thanks{Departamento de Matem\'atica y Ciencia de la Computaci\'on, Universidad de Santiago de Chile (USACH), Santiago, Chile.
{\tt pablo.perez@usach.cl}.}
\and
Claudio Torres\footnotemark[1]
}
\begin{document}

\maketitle

\begin{abstract}
Given two point sets $R$ and $B$ in the plane, with cardinalities $m$ and $n$, respectively, and each set stored in a separate R-tree, we present an algorithm to decide whether $R$ and $B$ are linearly separable. Our algorithm exploits the structure of the R-trees, loading into the main memory only relevant data, and runs in $O(m\log m + n\log n)$ time in the worst case. As experimental results, we implement the proposed algorithm and executed it on several real and synthetic point sets, showing that the percentage of nodes of the R-trees that are accessed and the memory usage are low in these cases. We also present an algorithm to compute the convex hull of $n$ planar points given in an R-tree, running in $O(n\log n)$ time in the worst case.
\end{abstract}

\section{Introduction}

Let $R$ be a finite set of red points and $B$ a finite set of blue points in the plane,
with cardinalities $m$ and $n$, respectively.
We say that $R$ and $B$ have {\em linear separability}, or that are {\em linearly separable},
if there exists a line such that: the elements of $R$ belong to one
of the halfplanes bounded by the line, the elements of $B$ belong
to the other halfplane, and if the line contains points from $R\cup B$, then
it contains points from exactly one between $R$ and $B$.
It is well known that $R$ and $B$ are linearly separable if and only 
if the convex hulls $conv(R)$ and $conv(B)$ have an empty intersection,
where $conv(X)$ denotes the convex hull of $X\subset\mathbb{R}^2$.
Because of this, deciding whether $R$ and $B$ are linearly separable, and in the positive 
case finding such a separating line, can be done within the following steps: 
compute $conv(R)$ and $conv(B)$ in times $O(m\log m)$ and $O(n\log n)$, respectively;
and test whether the intersection of $conv(R)$ and $conv(B)$ is empty in time $O(m+n)$~\cite{Toussaint85}.
In fact, if $conv(R)$ and $conv(B)$ has an empty intersection, then there exists
a separating line containing an edge of $conv(R)$ or $conv(B)$~\cite{Czyzowicz1992}.
Another way of deciding linear separability of $R$ and $B$, without computing
the convex hulls, is to formulate the separation
problem as a linear program (LP) in two variables and $m+n$ constraints, 
and use the algorithm of Megiddo~\cite{Megiddo84}
or the algorithm of Dyer~\cite{Dyer84} to solve the LP
in linear $O(m+n)$ time in the worst case. Other more practical option,
due to the big hidden constants in the $O(m+n)$ running times of these two algorithms,
is to use the simpler randomized algorithm of Seidel~\cite{Seidel91} that solves the LP
in expected $O(m+n)$ time.

In this paper, we study the problem of deciding whether $R$ and $B$ are linearly separable, and at 
the same time returning a separating line when the answer is positive, in the context of
the spatial databases: We are given as input an R-tree with the points of $R$ and a second R-tree with the 
points of $B$.
The R-tree is a secondary-storage, tree-like height-balanced data structure designed for the dynamic indexation of a set of dimensional geometric objects~\cite{Guttman1984,Manolopoulos2005}, and it is considered an standard in the context
of the spatial databases. See Section~\ref{sec:R-tree} for further details.
Because of the high data volume of the R-trees, loading from the R-trees all points of 
$R\cup B$ and running a known algorithm for testing linear
separability of $R$ and $B$ is considered impractical in this context. Then,
we aim to design an efficient algorithm working directly with the R-tree structure
and able of loading from the R-trees only the relevant data. Most of the algorithms in this context,
apart of producing the correct answer, aim to minimize both the running time and the number of 
nodes read from the R-trees, since each node is implemented as a disk page.

The spatial databases (SDB) represent an important aid for geographical information systems (GIS) 
to manage large amounts of data. However, SDB require the design of new data structures, 
spatial access methods, query languages, and algorithms to manage spatial information. 
In this sense, several algorithm have been designed for spatial queries such as the window query, 
the intersection query, the nearest neighbor, and the spatial join~\cite{Gaede1998,Shekhar2003}. 
Many of these queries are problems that were first tackled in the field of the computational
geometry, where it is assumed that all spatial objects fit into the main memory, and later, 
these problems were faced in the field of the SDB. Following this path, several algorithms have 
been proposed considering that objects are stored in a multidimensional structure, 
in most cases an R-tree~\cite{Guttman1984}.
For example, Corral et al.~\cite{Corral2004} and Hjaltason et al.~\cite{Hjaltason1998} 
presented several algorithms that solve the $k$-pairs ($k\ge 1$) of nearest neighbors 
between two sets, Roussopoulos et al.~\cite{Roussopoulos1995} showed an algorithm to find 
the nearest neighbor to a given point, 
Guti\'errez et al.~\cite{GutierrezPBC14} showed how to find a largest rectangle containing 
a query object and no point stored at an input R-tree, 
and B\"{o}hm and Kriegel~\cite{Bohm2001} described methods for computing the convex hull of point sets stored
in hierarchical index structures such as the R-tree or its variants. Among the geometric problems
in spatial databases, this later work is well related to the results of this paper since linear
the separability of two point sets can be decided by computing the convex hull of each set, and querying
the disjointness of the convex hulls. It is
worth noting that B\"{o}hm and Kriegel's algorithms do not exploit 
particular properties of R-trees such as the fact that the node regions are minimum bounding rectangles.
We explicitly exploit such a property in the algorithm that we propose for deciding the linear
separability of two point sets in the plane. Furthermore, in many cases our algorithm can decide the 
linear separability without computing such convex hulls.

The linear separability of two point sets, in either the plane or higher dimensions 
where a hyperplane separates the two point sets, is a concept well used in data mining
and classification problems~\cite{Bennett2000,Hand2001}. 
In this setting, where each point set represents the data of one class, 
% and under two-class datasets represented as linearly separable sets in an Euclidean
% space, 
the support vector machines (SVM) are a very robust
methodology for inference~\cite{Bennett2000}. The SVM is the hyperplane that separates 
the point sets and maximizes the minimum distance from the points to it.

%\Todo{Add motivation to consider linear separability in the plane in the GIS/SDB context}

The results of this paper are the following ones:
\begin{itemize}
	\item[(1)] We present an algorithm working directly with the R-trees of $R$ and $B$, able
    of deciding whether $R$ and $B$ are linearly separable, and in the positive case
    returning a separating line. In each step, it loads in the main memory data
    of only one level for each of the R-trees, and before descending in the R-trees to the
    next levels, to finally end at the leaf nodes, it verifies whether the gathered information is enough to
    decide the separability condition of $R$ and $B$. The asymptotic running time in the 
    worst case is $O(m\log m + n\log n)$. 
    % Then, from the theoretical point of view,
    % our algorithm in the worst case is equivalent to computing both convex hulls and verifying
    % whether they are disjoint. 
    
    \item[(2)] The techniques used in the separability testing algorithm can be extended to design
    an algorithm that computes the convex hull of a finite planar point set given as input
    in an R-tree. If $n$ denotes the number of input points, the asymptotic running time is
    $O(n\log n)$.
    
    \item[(3)] We implement the separability testing algorithm and executed it on several real and synthetic colored point
    sets, showing that in both cases the number of nodes of the R-trees that are accessed by the algorithm, and
    the amount of memory used, are low for these point sets. To generate synthetic point sets, we consider parameters
    such as the number of points to generate, the distribution of the points (uniform or Gaussian), and other parameters
    to define the positions of the minimum bounding rectangles of the red and blue points, respectively.
\end{itemize}

\paragraph{Notation:} Given a set $X\subset\mathbb{R}^2$, let $\mbr(X)$ denote the Minimum Bounding Rectangle 
(MBR) of $X$, which is the minimum-area rectangle that contains $X$. 
Every rectangle is considered axis-aligned in this paper.
Note that each of the four sides of $\mbr(X)$ contains at least one point of $X$.
Extending the notation,
if $Y$ is a set of subsets of
$\mathbb{R}^2$ (e.g.\ a set of rectangles), then $conv(Y)$ (resp.\ $\mbr(Y)$) denotes the convex
hull (resp.\ MBR) of all points contained in some element of $Y$, 
and $conv(Y\cup X)$ denotes
the convex hull of the union of the points of $conv(Y)$ and $conv(X)$.

\paragraph{Outline:} We continue this section by describing the R-tree data structure in Section~\ref{sec:R-tree},
and the idea of our separability algorithm in Section~\ref{sec:idea}. In Section~\ref{sec:preliminaries},
we present the concepts and geometric properties that our algorithm uses. Later, in Section~\ref{sec:algorithms},
we present our linear separability test algorithm, together with the algorithm to compute the convex hull
of a point set given in an R-tree. In Section~\ref{sec:experiments}, we show the experimentation results.
Finally, in Section~\ref{sec:conclusions}, we state the conclusions and further research.

\subsection{The R-tree}\label{sec:R-tree}

An R-tree is a generalization of the $\mathrm{B}^+$-trees designed for the dynamic indexation of a set of $k$-dimensional geometric objects~\cite{Manolopoulos2005}. It is a hierarchical, height balanced multidimensional data structure, designed to be used in secondary storage. In inner levels the indexed objects are represented by the $k$-dimensional Minimum Bounding Rectangles (MBRs), which bound their children. In this paper, we focus on two dimensions, therefore each MBR is an axis-aligned rectangle, represented only by its bottom-left and top-right vertices. By using the MBRs instead of the exact geometrical representations of the objects, its representational complexity is reduced to two points where the most important features of the spatial object (position and extent) are maintained. The MBR is an approximation widely employed, and the R-trees belong to the category of data-driven access methods, since their structure adapts itself to the MBRs distribution in the space.

\begin{figure}[t]
	\centering
	\includegraphics[scale=1.1,page=13]{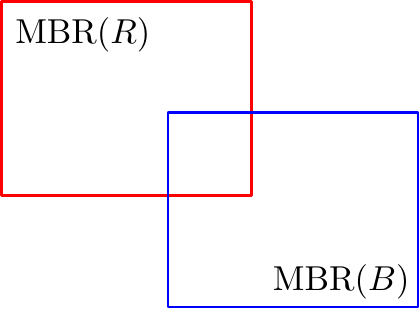}
	\caption{\small{
		An R-tree for the point set $\{p_1,p_2,\ldots,p_{23}\}$
        (picture based on one from~\cite{GutierrezPBC14}).
	}}
	\label{fig:R-tree}
\end{figure}

An R-tree for a finite point set $X\subset\mathbb{R}^2$ satisfies the following properties.
The leaves are on the same level,
and each leaf node contains indexed points of $X$. Every internal node contains entries of the form {\tt (MBR,ref)}, 
where {\tt ref} is a pointer to the child of the entry, and
{\tt MBR} is the minimum bounding rectangle of the {\tt MBR}'s 
(or the points if the child is a leaf node) contained in the entries of this child. 
An R-tree has the property that every node, except possibly the root, 
contains between $\mathsf{m}$ and $\mathsf{M}$ entries, 
where $2\le \mathsf{m} \le \lceil \mathsf{M}/2\rceil$. The root node contains at least two children nodes.
We will refer to the {\tt MBR} of an entry just as the rectangle of the entry, and
to the {\tt MBR}'s of the entries of a node just as the rectangles of the node.
For every entry (i.e.\ rectangle) of a node, the rectangles of the child node of the entry (i.e.
the child rectangles) are not necessarily pairwise disjoint,
so they can overlap between them. Furthermore, points of $X$ could be covered by different rectangles 
of the R-tree, although each point of $X$ appears only once in the leaf nodes.
All nodes of an R-tree are implemented as disk pages. 
We consider that the leaf nodes are at level $\mathsf{h}$ and the root is at level $0$,
where $\mathsf{h}$ is the height of the R-tree.

Figure~\ref{fig:R-tree} depicts an R-tree. Dotted lines denote the rectangles of the entries at the root node. The rectangles with solid lines are the rectangles in the entries of nodes parent of the leaves. Finally, the points are the indexed objects in the leaves of the R-tree.

\subsection{Idea of our algorithm}\label{sec:idea}

The general idea of our algorithm is based in the following
observation: Suppose that have loaded in the main memory a set $N_R$ of rectangles of the R-tree of $R$ so that
they all cover $R$, and a set of rectangles $N_B$ of the R-tree of $B$ so that they all cover $B$. 
If the convex hull $conv(N_R)$ of the rectangles of $N_R$ does not intersect the convex hull
$conv(N_B)$ of the rectangles of $N_B$, then $conv(R)$ and $conv(B)$ are disjoint since
$conv(R)\subseteq conv(N_R)$ and $conv(B)\subseteq conv(N_B)$, and $R$ and $B$ are hence linearly separable.

A more concrete idea is the following: 
According to the relative positions of $\mbr(R)$ and $\mbr(B)$, we choose
from $\mbr(R)$ a set $V_R$ of at most three vertices,
and a similar set $V_B$ from $\mbr(B)$. 
The idea of choosing $V_R$ and $V_B$ is that
$conv(R)$ and $conv(B)$ are disjoint if and only if
$conv(R\cup V_R)$ and $conv(B\cup V_B)$ are disjoint.
We start with $N_R$ being the set of 
the rectangles stored in the root node of the R-tree
of $R$, and $N_B$ being the set of the rectangles stored at the root node of the R-tree of $B$. 
Then, we iterate as follows:
If $conv(N_R\cup V_R)$ and $conv(N_B\cup V_B)$ are disjoint, then
we report a `yes' answer and build a separating line. 
Otherwise, for each rectangle of $N_R$ we take the region 
of points that can be ensured to belong to $conv(R\cup V_R)$ and
form the set of regions $N'_R$. A similar set $N'_B$ is formed from $N_B$.
If $conv(N'_R\cup V_R)$ and $conv(N'_B\cup V_B)$ are not disjoint, then
we report a `no' answer.
Otherwise, we filter the set $N_R$ so that the new $N_R$ contains 
only the rectangles (or points) that are relevant to 
decide the linear separation of $R$ and $B$, and `refine' $conv(N_R \cup V_R)$
by replacing each rectangle in $N_R$ by 
its respective child rectangles (or points) in the R-tree. We do a similar 
procedure with $N_B$ and continue the iteration. If at some
point in the iteration both $N_R$ and $N_B$ consist of only points, then the answer is given
by the intersection condition of $conv(N_R\cup V_R)$ and $conv(N_B\cup V_B)$.

It is worth noting that we test the linear separability condition via
computing the convex hulls (or approximations of the convex hulls, e.g.\ $conv(N_R\cup V_R)$) of the two point sets.
We do not use any asymptotic-faster linear-time LP separability
testing algorithm~\cite{Dyer84,Megiddo84,Seidel91} since in this case
the process of filtering rectangles is more expensive in time: We can discard a rectangle
if we can ensure that it is contained in the convex hull, and it can be 
done in logarithmic time, as we will see later in the paper. Otherwise, if we 
do not compute the convex hull, as it happens if we use any of the linear-time LP separability
testing algorithms, then to decide whether a rectangle can be discarded we should consider
the relative position of the rectangle with respect to the other rectangles,
and this is much more expensive than determining whether the rectangle is inside 
a convex hull.

\section{Preliminaries}\label{sec:preliminaries}

Considering $\mbr(R)$ and $\mbr(B)$, we make the following
definitions:
\begin{itemize}
	\item We say that
	$\mbr(R)$ and $\mbr(B)$ have a {\em corner} intersection if each rectangle
	contains exactly one vertex of the other one (see Figure~\ref{fig:corner}),
	or one of the rectangles is contained in the other and they share exactly one vertex.
	
	\item We say that $\mbr(R)$ and $\mbr(B)$ have a {\em side} intersection
	if one of the rectangles contains exactly two vertices of the other one (see Figure~\ref{fig:side}),
	and is not contained within it.
	
	\item We say that $\mbr(R)$ and $\mbr(B)$ have a {\em containment} intersection
	if the interior of one of the rectangles contains from the other rectangle the 
	four vertices (see Figure~\ref{fig:containment}),
	or two adjacent vertices with the other two ones contained in the boundary.
	
	\item We say that $\mbr(R)$ and $\mbr(B)$ have a {\em piercing} intersection
	if their interiors are not disjoint, and 
	no rectangle contains in the interior a vertex of the other one (see Figure~\ref{fig:piercing}).
\end{itemize} 

\begin{figure}[t]
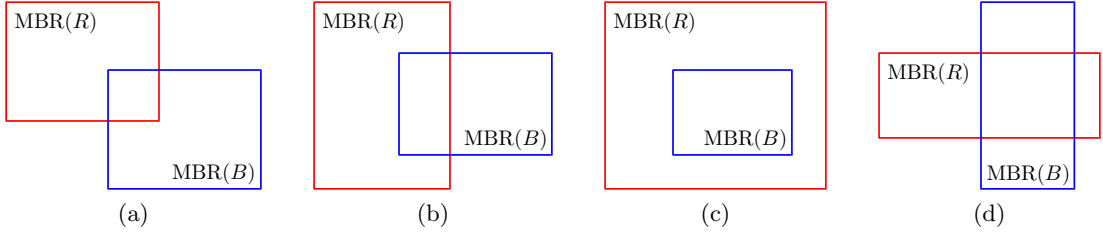

	\centering
	\subfloat[]{
		\includegraphics[scale=0.8,page=1]{figs.pdf}
		\label{fig:corner}
	}
	\hspace{0.3cm}
	\subfloat[]{
		\includegraphics[scale=0.8,page=2]{figs.pdf}
		\label{fig:side}
	}
	\hspace{0.3cm}
	\subfloat[]{
		\includegraphics[scale=0.8,page=3]{figs.pdf}
		\label{fig:containment}
	}
	\hspace{0.3cm}
	\subfloat[]{
		\includegraphics[scale=0.8,page=4]{figs.pdf}
		\label{fig:piercing}
	}
	\caption{\small{
		$\mbr(R)$ and $\mbr(B)$ have a:
		(a) corner intersection.
		(b) side intersection.
		(c) containment intersection.
		(d) piercing intersection.
	}}
	\label{fig:type-intersection}
\end{figure}

Up to symmetry, we assume without loss of generality throughout this paper
that the relative positions of $\mbr(R)$ and $\mbr(B)$, when they intersect,
are those shown in Figure~\ref{fig:type-intersection}.
Observe that if $\mbr(R)$ and $\mbr(B)$ have a piercing intersection, then
$R$ and $B$ are not linearly separable. This is because their convex hulls
have a non-empty intersection since each side of $\mbr(R)$ contains a red point
and each side of $\mbr(B)$ contains a blue point. Furthermore, the piercing 
intersection definition includes the case where $\mbr(R)=\mbr(B)$.
In the other cases of 
intersection (corner, side, and containment) the linear separation condition cannot be
directly deduced (see for examples Figure~\ref{fig:type-intersection2}).
In the trivial case where $\mbr(R)$ and $\mbr(B)$ do not intersect,
$R$ and $B$ are linearly separable with a vertical or horizontal line.

\begin{figure}[t]
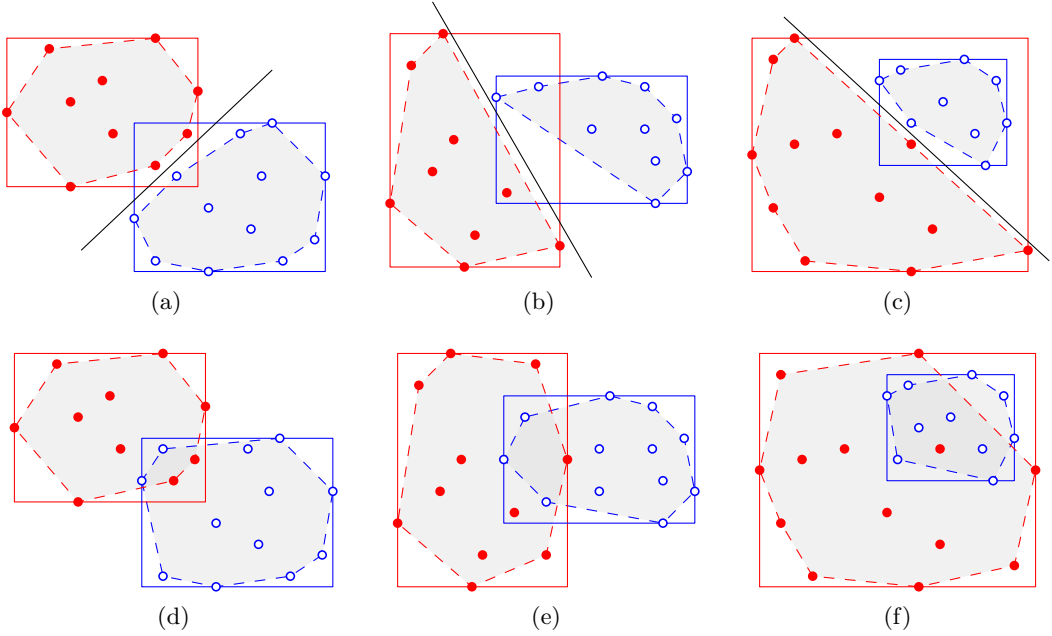

	\centering
	\subfloat[]{
		\includegraphics[scale=1.0,page=5]{figs.pdf}
		\label{fig:corner1}
	}
	\hspace{0.3cm}
	\subfloat[]{
		\includegraphics[scale=1.0,page=7]{figs.pdf}
		\label{fig:side1}
	}
	\hspace{0.3cm}
	\subfloat[]{
		\includegraphics[scale=1.0,page=9]{figs.pdf}
		\label{fig:containment1}
	} \\
	\subfloat[]{
		\includegraphics[scale=1.0,page=6]{figs.pdf}
		\label{fig:corner2}
	}
	\hspace{0.3cm}
	\subfloat[]{
		\includegraphics[scale=1.0,page=8]{figs.pdf}
		\label{fig:side2}
	}	
	\hspace{0.3cm}
	\subfloat[]{
		\includegraphics[scale=1.0,page=10]{figs.pdf}
		\label{fig:containment2}
	}
	\caption{\small{
		(a,b,c) Corner, side, and containment intersections of $\mbr(R)$ and $\mbr(B)$,
		where $R$ and $B$ have linear separability.
		(d,e,f) Corner, side, and containment intersections of $\mbr(R)$ and $\mbr(B)$,
		where $R$ and $B$ are not linearly separable. In each picture,
		the red points are represented as solid dots, and the blue points
		as tiny disks.
	}}
	\label{fig:type-intersection2}
\end{figure}

Our algorithm to decide whether $R$ and $B$ are linearly separable starts by detecting the 
type of intersection between $\mbr(R)$ and $\mbr(B)$. If they do not intersect, then we report
a `yes'. If they have a piercing intersection, then we report a `no'. Otherwise, if the
intersection is of type corner, side, or containment, then we need to elaborate a procedure that
gives the correct answer and at the same time returns a line separating $R$ and $B$ if the answer is
`yes'.
Suppose that $\mbr(R)$ and $\mbr(B)$ have a containment intersection, with $\mbr(B)$
contained in $\mbr(R)$. In this case, the rectangles $\mbr(R)$ and 
$\mbr(B\cup \{v\})$ have a corner intersection, where $v$ is any of the four vertices 
of $\mbr(R)$. Furthermore, if $conv(R)$ and $conv(B)$ are disjoint, then $conv(R)$ and $conv(B\cup \{v\})$
are also disjoint for some vertex $v$ of $\mbr(R)$. Conversely, if $conv(R)$ and $conv(B)$ are not disjoint,
then $conv(R)$ and $conv(B\cup \{v\})$ will not be disjoint for every vertex $v$ of $\mbr(R)$
since $conv(B)$ is contained in $conv(B\cup \{v\})$. Hence,
when $\mbr(R)$ and $\mbr(B)$ have a containment intersection, we can test whether
$R$ and $B$ are linearly separable by calling four times the test for the case where 
$\mbr(R)$ and $\mbr(B)$ have a corner intersection. That is, the answer is `yes' if and
only if $R$ and $B\cup\{v\}$ are linearly separable for at least one vertex $v$ of $R$.
Because of this reduction, we will consider in the following only intersections of
type corner or side.

\begin{definition}\label{def:VR-VB}
If $\mbr(R)$ and $\mbr(B)$ have a corner intersection, then let $V_R$ denote
the set of the top-right, top-left, and bottom-left vertices of $\mbr(R)$, and 
$V_B$ denote the set of the top-right, bottom-left, and bottom-right vertices of $\mbr(B)$.
If $\mbr(R)$ and $\mbr(B)$ have a side intersection, then let $V_R$ denote
the set of the top-left and bottom-left vertices of $\mbr(R)$, and 
$V_B$ denote the set of the top-right and bottom-right vertices of $\mbr(B)$ 
(see figures~\ref{fig:corner3} and~\ref{fig:side3}).
\end{definition}

\begin{figure}[t]
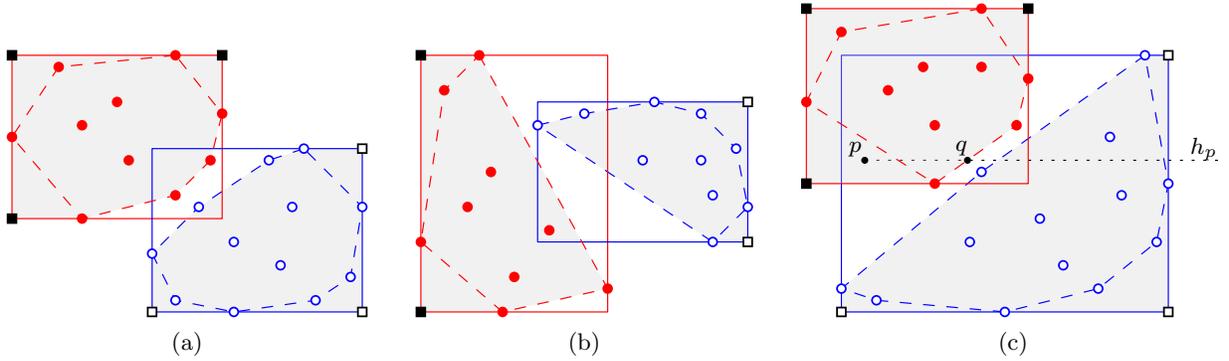

	\centering
	\subfloat[]{
		\includegraphics[scale=1.1,page=11]{figs.pdf}
		\label{fig:corner3}
	}
	\hspace{0.2cm}
	\subfloat[]{
		\includegraphics[scale=1.1,page=12]{figs.pdf}
		\label{fig:side3}
	}	
	\hspace{0.2cm}
	\subfloat[]{
		\includegraphics[scale=1.1,page=14]{figs.pdf}
		\label{fig:VB-VB-lemma}
	}
	\caption{\small{
		The sets of vertices $V_R$ and $V_B$ for: 
		(a) corner intersection; (b) side intersection.
		The vertices of $V_R$ are denoted as filled squares, and the vertices of
		$V_B$ as empty squares. 
		(c) Proof of Lemma~\ref{lem:VR-VB}.
		In each picture, the convex hulls
		$conv(R\cup V_R)$ and $conv(B\cup V_B)$ are denoted as shaded regions.
	}}
	\label{fig:VR-VB}
\end{figure}

The idea behind the definition of $V_R$ and $V_B$ is given in the next lemmas:

\begin{lemma}\label{lem:VR-VB}
The convex hulls $conv(R)$ and $conv(B)$ are disjoint if and only if the convex hulls
$conv(R\cup V_R)$ and $conv(B\cup V_B)$ are disjoint. 
\end{lemma}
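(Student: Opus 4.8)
The plan is to prove the contrapositive in one direction and a direct inclusion in the other. The ``if'' direction is immediate: since $R \subseteq R \cup V_R$ and $B \subseteq B \cup V_B$, we have $conv(R) \subseteq conv(R \cup V_R)$ and $conv(B) \subseteq conv(B \cup V_B)$; hence if the larger hulls are disjoint, so are the smaller ones. All the work is in the ``only if'' direction: assuming $conv(R) \cap conv(B) = \emptyset$, I must show $conv(R \cup V_R) \cap conv(B \cup V_B) = \emptyset$.

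For the ``only if'' direction I would argue by cases on the type of intersection of $\mbr(R)$ and $\mbr(B)$ (corner or side — containment having been reduced away). Fix the reference orientation of Figure~\ref{fig:type-intersection} so that, in the corner case, $\mbr(B)$ meets $\mbr(R)$ at the bottom-right region of $\mbr(R)$ / top-left region of $\mbr(B)$; the vertices chosen for $V_R$ (top-right, top-left, bottom-left of $\mbr(R)$) are exactly the three vertices of $\mbr(R)$ ``far'' from $\mbr(B)$, and symmetrically for $V_B$. Since $conv(R)$ and $conv(B)$ are disjoint and compact and convex, there is a separating line $\ell$; moreover by~\cite{Czyzowicz1992} we may take $\ell$ to contain an edge of $conv(R)$ or of $conv(B)$. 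The key geometric claim is: this separating line $\ell$ keeps all three vertices of $V_R$ on the $R$-side (including the line) and all vertices of $V_B$ on the $B$-side. Granting that claim, $\ell$ separates $R \cup V_R$ from $B \cup V_B$ as well, so their convex hulls are disjoint, finishing the case. The side-intersection case is analogous with $|V_R| = |V_B| = 2$.

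The main obstacle — and the geometric heart of the lemma — is establishing that claim about $\ell$ and the chosen vertices. The idea is that each vertex $v \in V_R$ can be ``reached'' from $conv(R)$ without crossing $conv(B)$: because $v$ is a vertex of $\mbr(R)$ lying outside $\mbr(B)$ on a side of $\mbr(R)$ touched by an actual red point, one shows that the axis-parallel path from that red point to $v$ along (or inside) $\mbr(R)$ stays disjoint from $\mbr(B) \supseteq conv(B)$, using precisely the corner/side structure (which vertices of one rectangle lie in the other). Hence $v$ lies in the same connected component of $\mathbb{R}^2 \setminus conv(B)$ as $conv(R)$, and since $\ell$ separates that component from $conv(B)$, $v$ is on the $R$-side of $\ell$. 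Equivalently — and this is how I would actually write it, avoiding connectedness subtleties — one checks directly that adding the vertices of $V_R$ to $R$ does not change which side of $conv(B)$ the convex hull sits on: $conv(R \cup V_R) \subseteq conv(R \cup (\mbr(R) \setminus \mbr(B))^{\text{relevant corner}})$, and this enlarged set is still disjoint from $conv(B)$ because it is contained in $\mbr(R)$ minus the corner occupied by $\mbr(B)$, a region a separating line through an edge of $conv(B)$ must avoid. I expect Figure~\ref{fig:VB-VB-lemma} to carry the picture; the write-up will need a careful enumeration of the (finitely many, up to symmetry) positions a separating line can take relative to the two MBRs, confirming in each that the designated vertices fall on the correct side.
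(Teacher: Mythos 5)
The ``if'' direction of your argument is fine and is the same one-line inclusion the paper uses. The gap is in the ``only if'' direction: your proof rests entirely on the ``key geometric claim'' that a separating line $\ell$ for $conv(R)$ and $conv(B)$ (chosen through an edge of one hull) automatically leaves all of $V_R$ on the $R$-side and all of $V_B$ on the $B$-side. That claim \emph{is} essentially the content of the lemma, and you never prove it: you defer it to ``a careful enumeration'' and offer two sketches, both of which fail. The connected-component argument is vacuous, because $conv(B)$ is compact and convex, so $\mathbb{R}^2\setminus conv(B)$ is connected: every point outside $conv(B)$ --- including every point strictly on the $B$-side of $\ell$ --- lies in ``the same component as $conv(R)$'', and $\ell$ certainly does not separate that component from $conv(B)$. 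So membership in that component gives no information about which side of $\ell$ a vertex of $V_R$ lands on.

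The fallback argument is false as stated. In the canonical corner configuration take $\mbr(R)=[0,6]\times[4,10]$, $\mbr(B)=[4,10]\times[0,6]$, $R=\{(0,10),(6,10),(0,4)\}$ and $B=\{(4,5.9),(4.1,6),(10,0)\}$: here $conv(R)\cap conv(B)=\emptyset$ (and the lemma's conclusion does hold), yet the ``enlarged set'' $conv\bigl(R\cup(\mbr(R)\setminus\mbr(B))\bigr)$ is neither contained in $\mbr(R)$ minus the overlap corner nor disjoint from $conv(B)$: the convex hull of the L-shaped region $\mbr(R)\setminus\mbr(B)$ necessarily re-enters the overlap $[4,6]\times[4,6]$ (it contains all of it on or above the diagonal), and in particular it contains the blue points $(4,5.9)$ and $(4.1,6)$. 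So neither sketch establishes the key claim. The claim itself is plausibly true and provable by exactly the case analysis you postpone --- e.g.\ if a separating line put the bottom-left vertex of $\mbr(R)$ strictly on the $B$-side it would have to cut the bottom and left sides of $\mbr(R)$, which carry red points, and one then checks that a blue point on the top or left side of $\mbr(B)$ is forced onto the $R$-side, a contradiction; this uses crucially that every side of an MBR contains a point of its set, and it is precisely the missing proof. Note also that the paper takes a different route entirely: it never invokes a separating line, but argues pointwise that no point of $conv(R\cup V_R)\setminus conv(R)$ can lie in $conv(B\cup V_B)$ (and symmetrically), using a horizontal or vertical ray into $\mbr(R)\cap\mbr(B)$; as submitted, your proposal has a genuine gap where its central claim should be justified.
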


\begin{proof}
If $conv(R\cup V_R)$ and $conv(B\cup V_B)$ are disjoint, then trivially 
$conv(R)$ and $conv(B)$ are also disjoint because $conv(R)\subseteq conv(R\cup V_R)$ and
$conv(B)\subseteq conv(B\cup V_B)$.
Then, suppose that $conv(R)$ and $conv(B)$ are disjoint (see Figure~\ref{fig:VB-VB-lemma}).
Let $p$ be a point of $conv(R\cup V_R)\setminus conv(R)$. 
If $p$ does not belong to $\mbr(B)$, then $p$ is not contained in
$conv(B\cup V_B)$. Then, assume that $p$ belongs to the intersection
$\mbr(R)\cap \mbr(B)$.
Let $h_p$ be the halfline with apex at $p$, and
oriented rightwards or downwards, such that $h_p$ contains a point $q$ of $conv(R)\cap \mbr(B)$. 
Note that 
$h_p$ always exists given $p$, and assume without loss of generality that $h_p$ is horizontal. 
The case where $h_p$ is vertical (which appears only in the case of a corner intersection) is analogous.
Since $q$ does not belong to $conv(B)$ because 
$conv(R)$ and $conv(B)$ are disjoint, and $q$ is in the interior of $\mbr(R)\cap \mbr(B)$,
$q$ is both to the left and above of $conv(B)$ in the case of a corner intersection, 
and to the left of $conv(B)$ in the
case of a side intersection.
Then, $q$ does not belong to $conv(B\cup V_B)\setminus conv(B)$ because of
the definition of $V_B$. 
Hence, $p$ is not in $conv(B\cup V_B)\setminus conv(B)$ because of the convexity of $conv(B)$
and that $p$ is to left of $q$. Similar symmetric
arguments show that if a point $p'$ belongs to $conv(B\cup V_B)\setminus conv(B)$, then 
$p'$ is not in $conv(R\cup V_R)\setminus conv(R)$. All of these observations
imply that $conv(R\cup V_R)$ and $conv(B\cup V_B)$ are disjoint.
\end{proof}

\begin{lemma}\label{lem:VR-VB-inter}
The convex hulls $conv(R\cup V_R)$ and $conv(B\cup V_B)$
have a non-empty intersection if and only if one of them contains a vertex of the other one.
Furthermore, the vertices of $conv(R\cup V_R)$ contained in $conv(B\cup V_B)$ are all consecutive,
and the vertices of $conv(B\cup V_B)$ contained in $conv(R\cup V_R)$ are all consecutive.
\end{lemma}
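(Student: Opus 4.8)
The plan is to prove both directions of the biconditional and then the consecutiveness claims. One direction is immediate: if one of the two convex hulls contains a vertex of the other, then clearly the two convex hulls intersect. For the converse, suppose $conv(R\cup V_R)$ and $conv(B\cup V_B)$ have a non-empty intersection but neither contains a vertex of the other. Two convex polygons intersecting without either containing a vertex of the other forces their boundaries to cross \emph{transversally}, which (for convex polygons) means the boundaries cross in exactly two points, each lying in the relative interior of an edge of each polygon; equivalently, $\mathrm{bd}(conv(R\cup V_R))$ and $\mathrm{bd}(conv(B\cup V_B))$ interleave like a ``piercing'' configuration. I would then show this is impossible here by exploiting the special structure imposed by $V_R$ and $V_B$.

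The key geometric fact I would use is that, by Definition~\ref{def:VR-VB}, $conv(R\cup V_R)$ and $conv(B\cup V_B)$ each have a ``flat side'' aligned with the relevant sides of $\mbr(R)$ and $\mbr(B)$: in the corner case, $V_R$ contributes the top and left sides of $\mbr(R)$ (so $conv(R\cup V_R)$ contains the top-left ``bulk'' of $\mbr(R)$ as part of its boundary between the top-right and bottom-left vertices), and symmetrically $V_B$ makes $conv(B\cup V_B)$ contain the bottom-right bulk of $\mbr(B)$; in the side case the analogous statement holds with the left side of $\mbr(R)$ and the right side of $\mbr(B)$. Because of this, the only place where $conv(R\cup V_R)$ and $conv(B\cup V_B)$ can possibly meet is inside $\mbr(R)\cap\mbr(B)$, and inside this intersection rectangle $conv(R\cup V_R)$ looks (locally) like the region to the lower-right of a convex chain of red points, while $conv(B\cup V_B)$ looks like the region to the upper-left of a convex chain of blue points. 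Two such ``monotone staircase-bounded'' regions, if they intersect at all, must have one containing a vertex of the other: a transversal crossing of their boundaries would require the red chain and the blue chain to cross each other twice within $\mbr(R)\cap\mbr(B)$, but each chain is $x$-monotone (and $y$-monotone) in that rectangle, so two monotone chains cross at most once — contradiction. This pins down that the intersection, if nonempty, contains a vertex of one hull.

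For the consecutiveness claims, I would argue as follows. Let $P=conv(R\cup V_R)$ and $Q=conv(B\cup V_B)$, and suppose $P$ contains a set $S$ of vertices of $Q$ with $|S|\ge 1$. The vertices of $Q$ lying in $P$ form exactly $Q\cap\mathrm{vertices}(Q)$, and as one walks around $\mathrm{bd}(Q)$ one enters $P$ and leaves $P$; since $P$ is convex, the set $\mathrm{bd}(Q)\cap P$ is a connected arc of $\mathrm{bd}(Q)$ (a convex set intersected with a curve that is the boundary of another convex set can be disconnected in general, but here $Q$ is convex and $P$ is convex, so $\mathrm{bd}(Q)\setminus P = \mathrm{bd}(Q)\cap (\mathbb{R}^2\setminus P)$; because $P$ is convex, a line segment of $\mathrm{bd}(Q)$ — each edge — meets $P$ in a subsegment, and the ``in $P$'' portion of the whole boundary walk is contiguous, as $\mathrm{bd}(Q)\cap P$ is the intersection of the convex region $P$ with the boundary of the convex region $Q$, which is an arc). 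Hence the vertices of $Q$ inside $P$, being the vertices on that arc, are consecutive around $Q$; the symmetric statement for vertices of $P$ inside $Q$ follows identically.

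The main obstacle I expect is the careful verification that within $\mbr(R)\cap\mbr(B)$ the relevant boundary portions of $conv(R\cup V_R)$ and $conv(B\cup V_B)$ are genuinely monotone chains separated in the right way — i.e., nailing down the claim ``two $x$-monotone convex chains cross at most once, hence two polygons of this special shape cannot cross transversally.'' This requires a clean case analysis over the corner versus side configurations of Figure~\ref{fig:VR-VB}, and a precise argument (reusing the reasoning in the proof of Lemma~\ref{lem:VR-VB} about points of $conv(R\cup V_R)\setminus conv(R)$ lying only in $\mbr(R)\cap\mbr(B)$ and to the lower-right of $conv(R)$ there, and symmetrically for $B$) that rules out any crossing of the two hull boundaries outside of a single ``entering'' point. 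Everything else — the trivial direction and the consecutiveness corollary — is routine convexity bookkeeping.
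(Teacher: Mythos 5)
Your overall strategy---localizing everything to $\mbr(R)\cap\mbr(B)$ and counting boundary crossings of the two hulls there---is genuinely different from the paper's proof, which instead counts bichromatic edges of $conv((R\cup V_R)\cup(B\cup V_B))$: a vertex-free intersection would force at least four bichromatic edges on that joint hull, while the choice of $V_R,V_B$ and the relative positions of the MBRs leave only two, and this same two-edge fact immediately yields the consecutiveness claim. However, two quantitative claims your argument rests on are false as stated. First, two intersecting convex polygons with neither containing a vertex of the other do \emph{not} cross in exactly two boundary points: the piercing configuration of Figure~\ref{fig:piercing} is precisely such a situation and has four crossings; in fact, if the hulls share an interior point and neither contains a vertex of the other, their boundaries must cross at least four times (with only two crossings the intersection would be bounded by two straight subsegments and could not have interior), which is the count the paper exploits. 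Second, ``two $x$-monotone chains cross at most once'' is not true---two monotone staircases can cross arbitrarily often, and even two convex monotone chains can cross twice. What saves your argument is not monotonicity but curvature direction: inside $\mbr(R)\cap\mbr(B)$ the relevant piece of the boundary of $conv(R\cup V_R)$ is a convex chain and the relevant piece of the boundary of $conv(B\cup V_B)$ is concave with respect to the same direction, so they cross at most twice. With the corrected counts (at least four crossings needed, at most two possible, and all crossings confined to $\mbr(R)\cap\mbr(B)$ because each hull lies in its own MBR) the contradiction still goes through, but as written both halves of your inequality are unjustified, and you still owe the case check (corner versus side) that the boundary pieces inside the intersection rectangle are single chains of the asserted shape---the very step you flag as the main obstacle.

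The consecutiveness argument has a genuine gap as well. Writing $P=conv(R\cup V_R)$ and $Q=conv(B\cup V_B)$, the connectedness of $\mathrm{bd}(Q)\cap P$ is \emph{not} a consequence of the convexity of $P$ and $Q$: in the piercing configuration the boundary of one rectangle meets the other rectangle in two disjoint arcs, so your parenthetical ``which is an arc'' begs the question. Connectedness of that boundary portion is essentially equivalent to the hulls' boundaries crossing at most twice, i.e., it is exactly the special structural fact being established in the first part (in the paper's language, that the joint hull has only two bichromatic edges). So the second statement of the lemma must be derived from the corrected crossing bound (or from the two-bichromatic-edge count), not from convexity alone.
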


\begin{proof}
If one of $conv(R\cup V_R)$ and $conv(B\cup V_B)$ contains a vertex of the other one, 
then they are not disjoint. Suppose now that $conv(R\cup V_R)$ and $conv(B\cup V_B)$ 
are not disjoint. Let $X$ be a set of red points and $Y$ a set of blue points. 
If $conv(X)$ and $conv(Y)$ are not disjoint and neither of them contains a vertex
of the other one (see e.g.\ the piercing intersection of Figure~\ref{fig:piercing}),
then the convex hull $conv(X\cup Y)$ contains at least four bichromatic edges (i.e.\ edges connecting
points of different colors). For $X=R\cup V_R$ and $Y=B\cup V_B$, 
$conv(X\cup Y)$ contains only two bichromatic edges, given the relative positions
of $\mbr(R)$ and $\mbr(B)$ and the definitions of $V_R$ and $V_B$. Hence,
one of $conv(R\cup V_R)$ and $conv(B\cup V_B)$ must contain a vertex of the other one.
Furthermore, the fact that $conv(X\cup Y)$ contains only two bichromatic edges implies
the second part of the lemma.
\end{proof}

\section{Optimistic and pessimistic Convex Hulls}\label{sec:hulls}

The idea in this section is the following: Suppose that we have loaded a set of rectangles
from the R-tree of $R$, and a set of rectangles from the R-tree of $B$. We explain
a way of determining from these two sets of rectangles whether we have enough information 
to decide that $conv(R\cup V_R)$
and $conv(B \cup V_B)$ are disjoint or that they are not disjoint, without going deeper
in the R-tree loading more rectangles or points.

\begin{figure}[t]
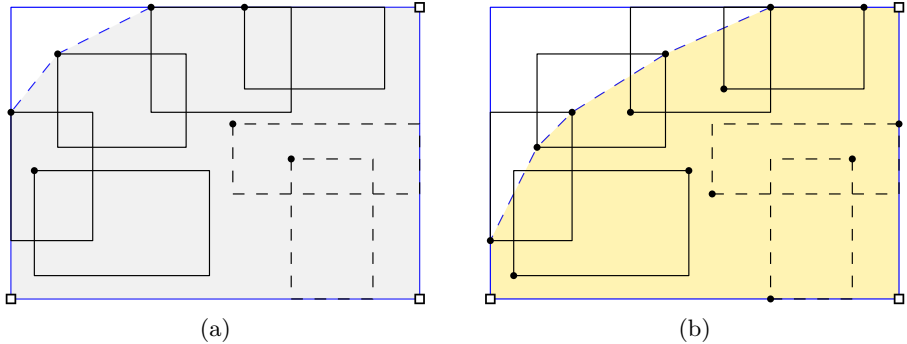

	\centering
	\subfloat[]{
		\includegraphics[scale=1.1,page=15]{figs.pdf}
		\label{fig:NB1}
	}
	\hspace{0.4cm}
	\subfloat[]{
		\includegraphics[scale=1.1,page=16]{figs.pdf}
		\label{fig:NB2}
	}	
	\caption{\small{
		A set $N_B$ of rectangles from the R-tree of $B$, in the case
		where $\mbr(R)$ and $\mbr(B)$ have a corner intersection.
		(a) The optimistic convex hull $\opt(N_B)$.
		(b) The pessimistic convex hull $\pess(N_B)$. The rectangles
		in dashed lines can be removed from $N_B$ since they
		are contained in $\pess(N_B)$.
	}}
	\label{fig:NB-corner}
\end{figure}

\begin{definition}\label{def:NR-NB}
Let $N_R$ be a set of rectangles from the R-tree of $R$, and 
$N_B$ a set of rectangles from the R-tree of $B$, such that
the following properties are satisfied:
\begin{itemize}
	\item[(1)] $conv(R\cup V_R)\subseteq conv(N_R\cup R)$.
	\item[(2)] $conv(B\cup V_B)\subseteq conv(N_B\cup B)$.
\end{itemize}
\end{definition}

For examples of Definition~\ref{def:NR-NB}, refer to Figure~\ref{fig:NB-corner}
and Figure~\ref{fig:NB-side}.
Since a point can be seen as a rectangle of null perimeter,
we extend the definitions of $N_R$ and $N_B$ so that these sets
can be made of points. 

\begin{definition}\label{def:optimistic}
Given the sets $N_R$ and $N_B$, the optimistic convex hull of $R\cup V_R$ is 
the set $\opt(N_R)=conv(N_R\cup V_R)$ which contains $conv(R\cup V_R)$, and the optimistic
convex hull of $B\cup V_B$ is set $\opt(N_B)=conv(N_B\cup V_B)$ 
which contains $conv(B\cup V_B)$ (see Figure~\ref{fig:NB1} and Figure~\ref{fig:NB3}).
\end{definition}

The idea of defining the optimistic convex hulls is the following observation:
If $\opt(N_R)$ and $\opt(N_B)$ are disjoint, then we can ensure that $conv(R\cup V_R)$ and $conv(B\cup V_B)$
are disjoint, and then $R$ and $B$ are linearly separable because of Lemma~\ref{lem:VR-VB}.
Furthermore, $\opt(N_R)$ and $\opt(N_B)$ are approximations to $conv(R\cup V_R)$ and $conv(B\cup V_B)$,
and to compute them we do not need the points covered by their rectangles, which are
located in the leaves of the R-trees.
%, and it is expensive to read them all. 

We also need a method for determining from $N_R$ and $N_B$ whether there is enough
information to decide that $conv(R\cup V_R)$ and $conv(B\cup V_B)$ are not disjoint.
This is explained in what follows.

Let $N$ be a rectangle: $\NE(N)$ denotes the north-east triangle of $N$, that is,
the subset of points of $N$ in or above the diagonal connecting the top-left and
bottom-right vertices.  Similarly, $\NW(N)$ denotes the subset of points of $N$ 
in or above the diagonal connecting the top-right and
bottom-left vertices; $\SE(N)$ denotes the subset of points of $N$ 
in or below the diagonal connecting the top-right and
bottom-left vertices; and $\SW(N)$ denotes the subset of points of $N$ 
in or below the diagonal connecting the top-left and
bottom-right vertices. 

Suppose that $\mbr(R)$ and $\mbr(B)$ have a corner intersection, and
let $N$ be a rectangle of $N_R$. Since in our model of R-tree every rectangle is
a minimum bounding rectangle and thus contains points of the represented point set
in every side, the set $\NW(N)$ is contained in the convex hull $conv(R\cup V_R)$. Similarly,
if $N$ is a rectangle of $N_B$, then the set $\SE(N)$ is contained in the convex hull $conv(B\cup V_B)$.
We use these observations to define the pessimistic convex hulls, which are always contained
in our goal $conv(R\cup V_R)$ and $conv(B\cup V_B)$.

\begin{definition}\label{def:pessmistic-corner}
Let $R$ and $B$ be red and blue point sets such that $\mbr(R)$ and $\mbr(B)$ have a corner intersection.
The pessimistic convex hull of $R\cup V_R$ is 
the set $\pess(N_R)=conv(N'_R\cup V_R)$ which is contained in $conv(R\cup V_R)$,
where $N'_R=\{\NW(N)\mid N\in N_R\}$.
The pessimistic convex hull of $B\cup V_B$ is 
the set $\pess(N_B)=conv(N'_B\cup V_B)$ which is contained in $conv(B\cup V_B)$,
where $N'_B=\{\SE(N)\mid N\in N_B\}$ (see Figure~\ref{fig:NB2}).
\end{definition}

\begin{figure}[t]
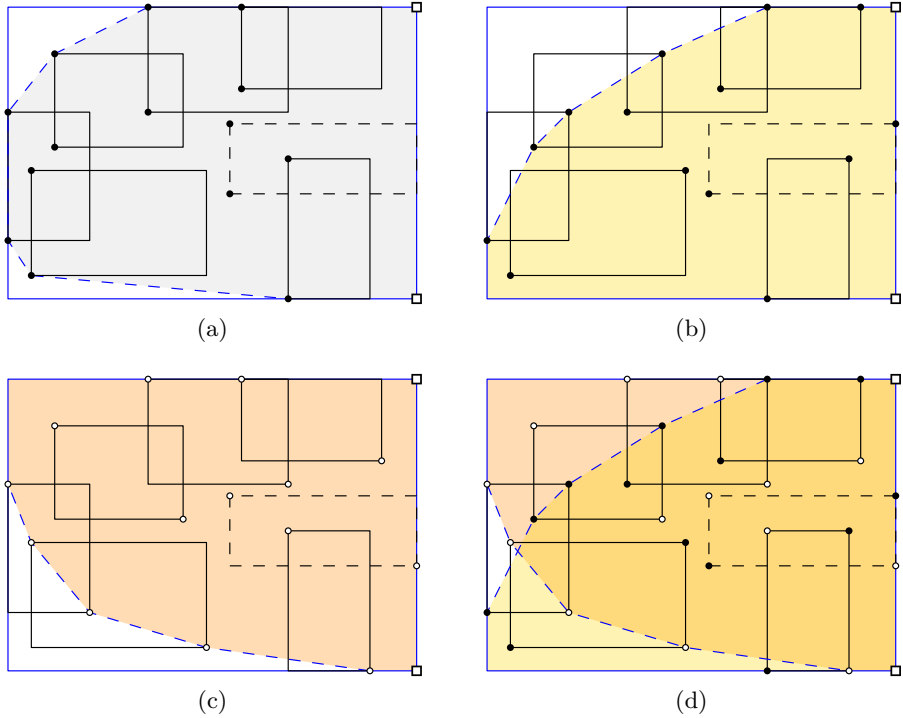

	\centering
	\subfloat[]{
		\includegraphics[scale=1.1,page=17]{figs.pdf}
		\label{fig:NB3}
	}
	\hspace{0.4cm}
	\subfloat[]{
		\includegraphics[scale=1.1,page=19]{figs.pdf}
		\label{fig:NB4}
	}
	\hspace{0.4cm}
	\subfloat[]{
		\includegraphics[scale=1.1,page=20]{figs.pdf}
		\label{fig:NB5}
	}
	\hspace{0.4cm}
	\subfloat[]{
		\includegraphics[scale=1.1,page=18]{figs.pdf}
		\label{fig:NB6}
	}	
	\caption{\small{
		A set $N_B$ of rectangles from the R-tree of $B$, when
		$\mbr(R)$ and $\mbr(B)$ have a side intersection.
		(a) The optimistic convex hull $\opt(N_B)$.
		(b) The convex hull $conv(N'_B\cup V_B \cup \{v_1\})$. 
		(c) The convex hull $conv(N''_B\cup V_B \cup \{v_2\})$.
		(d) The pessimistic convex hull $\pess(N_B)=conv(N'_B\cup V_B \cup \{v_1\})\cap conv(N''_B\cup V_B \cup \{v_2\})$.
		The rectangle in dashed lines can be removed from $N_B$ since it
		is contained in $\pess(N_B)$.
	}}
	\label{fig:NB-side}
\end{figure}

\begin{definition}\label{def:pessmistic-side}
Let $R$ and $B$ be red and blue point sets such that $\mbr(R)$ and $\mbr(B)$ have a side intersection.
The pessimistic convex hull of $R\cup V_R$ is 
the set $\pess(N_R)=conv(N'_R\cup V_R \cup \{u_1\}) \cap conv(N''_R\cup V_R \cup \{u_2\})$,
where
$N'_R=\{\NW(N)\mid N\in N_R\}$, $N''_R=\{\SW(N)\mid N\in N_R\}$, 
and $u_1$ and $u_2$ are the top-right and bottom-right vertices of $\mbr(R)$,
respectively.
The pessimistic convex hull of $B\cup V_B$ is 
the set $\pess(N_B)=conv(N'_B\cup V_B \cup \{v_1\})\cap conv(N''_B\cup V_B \cup \{v_2\})$,
where
$N'_B=\{\SE(N)\mid N\in N_B\}$, $N''_B=\{\NE(N)\mid N\in N_B\}$, 
and $v_1$ and $v_2$ are the bottom-left and top-left vertices of $\mbr(B)$,
respectively.
(see Figure~\ref{fig:NB-side}).
\end{definition}

When $\mbr(R)$ and $\mbr(B)$ have a corner intersection, the facts 
$\pess(N_R)\subseteq conv(R\cup V_R)$ and $\pess(N_B)\subseteq conv(B\cup V_B)$
are clear. In such a case of intersection, if $\pess(N_R)$ and $\pess(N_B)$ are not disjoint, then we can ensure that 
$conv(R\cup V_R)$ and $conv(B\cup V_B)$
are not disjoint, and then $R$ and $B$ are not linearly separable because of Lemma~\ref{lem:VR-VB}.
When $\mbr(R)$ and $\mbr(B)$ have a side intersection, the same
facts are proved in the following lemma, and then we can ensure that 
$conv(R\cup V_R)$ and $conv(B\cup V_B)$ are not disjoint if $\pess(N_R)$ and $\pess(N_B)$ are not.

\begin{lemma}\label{lem:pess-side}
Let $R$ and $B$ be red and blue point sets such that $\mbr(R)$ and $\mbr(B)$ have a side intersection.
The pessimistic convex hulls $\pess(N_R)$ and $\pess(N_B)$ are contained in 
$conv(R\cup V_R)$ and $conv(B\cup V_B)$, respectively.
\end{lemma}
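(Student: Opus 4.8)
The plan is to show that each of the two intersecting half-planes defining $\pess(N_R)$, namely $conv(N'_R\cup V_R\cup\{u_1\})$ and $conv(N''_R\cup V_R\cup\{u_2\})$, is itself contained in $conv(R\cup V_R)$; the conclusion for $\pess(N_R)$ then follows immediately since the intersection of two subsets of $conv(R\cup V_R)$ is again a subset of $conv(R\cup V_R)$. By the obvious symmetry between red and blue in the side-intersection configuration (reflecting through the vertical line separating the two ``outer'' portions of $\mbr(R)$ and $\mbr(B)$), the argument for $\pess(N_B)$ is identical, so I would carry out the details only for $\pess(N_R)$ and remark that the blue case follows \emph{mutatis mutandis}.

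The key step is to verify that $N'_R\cup V_R\cup\{u_1\}\subseteq conv(R\cup V_R)$, and symmetrically that $N''_R\cup V_R\cup\{u_2\}\subseteq conv(R\cup V_R)$; once each generating set is inside the convex set $conv(R\cup V_R)$, its convex hull is too. The vertices in $V_R$ are trivially inside. For $u_1$, the top-right vertex of $\mbr(R)$: I would observe that in the side-intersection configuration $V_R$ consists of the top-left and bottom-left vertices of $\mbr(R)$, and the right side of $\mbr(R)$ contains a point of $R$ (since $\mbr(R)$ is a minimum bounding rectangle, each side carries a point of $R$); together with the point of $R$ on the top side, the two left vertices in $V_R$ form a quadrilateral that contains $u_1$ — more simply, $u_1$ lies in the triangle spanned by the top-left vertex of $\mbr(R)$, the bottom-left vertex of $\mbr(R)$, and the topmost-and-rightmost witness points of $R$ on the boundary of $\mbr(R)$. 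The analogous statement holds for $u_2$ using the bottom side. Finally, for a rectangle $N\in N_R$, the set $\NW(N)$ lies in $conv(R\cup V_R)$: each side of $N$ contains a point of $R$, and $\NW(N)$ is the triangle of $N$ above the anti-diagonal, whose three extreme points (the top-left vertex, the top-right vertex, and the bottom-left vertex of $N$) are each convex combinations of boundary witnesses of $R$ on $N$ together with, where needed, vertices of $V_R$; since $N_R$ is chosen so that $conv(R\cup V_R)\subseteq conv(N_R\cup R)$ and each $N$ is an MBR of a sub-collection whose represented points lie in $R$, the witnesses are genuine points of $R$. The same works for $\SW(N)$ using the bottom-left/bottom-right/top-left extreme points.

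The main obstacle I anticipate is making precise the claim that the extreme vertices of $\NW(N)$ (and of $\NE, \SW$, etc.) actually lie in $conv(R\cup V_R)$ rather than merely in $conv(N_R\cup R)$. The subtlety is that a corner of a rectangle $N$ need not be a point of $R$; what is guaranteed is only that each \emph{side} of $N$ contains a point of $R$. So I would need the geometric lemma that, for a rectangle $N$ all of whose sides meet $R$, and in the side-intersection configuration where $N$ lies to one ``side'' of the separating direction, the corner points of the $\NW$-triangle are dominated — in the appropriate coordinate order — by points of $R$ on the boundary of $N$, so that each such corner is a convex combination of two $R$-points and hence in $conv(R)\subseteq conv(R\cup V_R)$. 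Care is required near the ends of $\mbr(R)$ where the auxiliary vertices $u_1,u_2$ are introduced precisely to ``fill in'' the part of the rectangle that sticks out past the $R$-witnesses; there I would lean on the inclusion of $u_1,u_2$ in $conv(R\cup V_R)$ established above, and on convexity, to absorb those corner points. Once these containments are in hand, intersecting the two half-hulls and invoking convexity closes the proof.
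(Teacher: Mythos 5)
There is a genuine gap, and it sits at the very first step of your plan. You propose to show that each of the two hulls $conv(N'_R\cup V_R\cup\{u_1\})$ and $conv(N''_R\cup V_R\cup\{u_2\})$ is \emph{separately} contained in $conv(R\cup V_R)$. In the side-intersection case this is false, because $V_R$ consists only of the two \emph{left} vertices of $\mbr(R)$, and the auxiliary right vertices $u_1,u_2$ are in general not in $conv(R\cup V_R)$. Concretely, take $\mbr(R)=[0,1]^2$ and $R=\{(0,0.5),(0.1,1),(0.1,0),(1,0.5)\}$, so that $V_R=\{(0,0),(0,1)\}$ and $conv(R\cup V_R)$ is the pentagon with vertices $(0,0),(0,1),(0.1,1),(1,0.5),(0.1,0)$. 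The top-right corner $u_1=(1,1)$ lies strictly outside this pentagon (the edge from $(0.1,1)$ to $(1,0.5)$ separates it), and it also lies outside the region you describe (the two left corners of $\mbr(R)$ together with the topmost and rightmost witness points of $R$). For the same reason $\NW(N)\not\subseteq conv(R\cup V_R)$ for, e.g., the rectangle $N=[0.1,1]\times[0.5,1]$, the MBR of $\{(0.1,1),(1,0.5)\}$, whose $\NW$ triangle contains $(1,1)$. Your closing remark that you would ``lean on the inclusion of $u_1,u_2$ in $conv(R\cup V_R)$ established above'' is therefore circular: that inclusion is precisely what fails. (Contrast with the corner case, where $V_R$ contains three vertices of $\mbr(R)$ and the per-rectangle containment $\NW(N)\subseteq conv(R\cup V_R)$ does hold; this is exactly why the side case needs Definition~\ref{def:pessmistic-side} with two hulls and a separate lemma.)

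Because each of the two hulls individually overshoots $conv(R\cup V_R)$ (the first near $u_1$, the second near $u_2$), any correct proof must use the fact that the overshoot of one hull is trimmed away by the other: the containment holds only for their intersection. This is what the paper's proof does for $\pess(N_B)$: it picks the extremal corners $t_1,b_1,\ell_1,\ell_2$ of those rectangles of $N_B$ whose sides are flush with the sides of $\mbr(B)$, considers the boundary chain $U$ of $conv(N'_B\cup V_B\cup\{v_1\})$ from $\ell_2$ to $t_1$ and the boundary chain $L$ of $conv(N''_B\cup V_B\cup\{v_2\})$ from $b_1$ to $\ell_1$, and then uses the MBR property to exhibit blue points $t,b,\ell,\ell'$ on the top, bottom, and left sides of $\mbr(B)$ that are vertices of $conv(B\cup V_B)$ and force $U$ to lie below the upper chain of $conv(B\cup V_B)$ and $L$ to lie above its lower chain; the intersection of the two hulls is thereby squeezed inside $conv(B\cup V_B)$. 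To repair your write-up you would have to replace the per-hull containment claim with a chain-comparison argument of this kind (or an equivalent one); as it stands, the proposed decomposition cannot be made to work.
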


\begin{figure}[t]
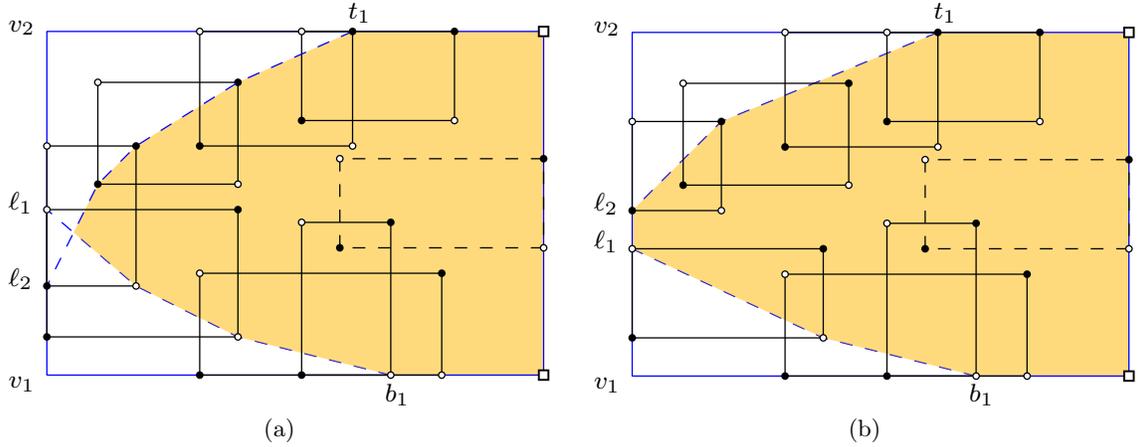

	\centering
	\subfloat[]{
		\includegraphics[scale=1.2,page=21]{figs.pdf}
		\label{fig:pess-side1}
	}
	\hspace{0.2cm}
	\subfloat[]{
		\includegraphics[scale=1.2,page=22]{figs.pdf}
		\label{fig:pess-side2}
	}	
	\caption{\small{
		Proof of Lemma~\ref{lem:pess-side}.
	}}
	\label{fig:pess-side}
\end{figure}

\begin{proof}
We will prove that $\pess(N_B)\subseteq conv(B\cup V_B)$. The arguments to prove
$\pess(N_R)\subseteq conv(R\cup V_R)$ are analogous. 
From the rectangles of
$N_B$ whose top sides are aligned with the top side of $\mbr(B)$, let $t_1$ a left-most of
the top-right vertices (see Figure~\ref{fig:pess-side}).
From the rectangles of
$N_B$ whose bottom sides are aligned with the bottom side of $\mbr(B)$, let $b_1$ a left-most of
the bottom-right vertices. 
From the rectangles of $N_B$ whose left sides are aligned with the left side
of $\mbr(B)$, let $\ell_1$ be a bottom-most of the top-left vertices, 
and $\ell_2$ a top-most of the bottom-left vertices.
Observe that $\ell_2$ and $t_1$ are vertices of $conv(N'_B\cup V_B\cup\{v_1\})$,
and that $b_1$ and $\ell_1$ are vertices of $conv(N''_B\cup V_B \cup\{v_2\})$.
Let $U$ denote the path
connecting $\ell_2$ with $t_1$ along the boundary of $conv(N'_B\cup V_B\cup\{v_1\})$
and clockwise, and $L$ denote the path connecting $b_1$ with $\ell_1$
along the boundary of $conv(N''_B\cup V_B\cup\{v_2\})$ and clockwise.
Note that there are the following blue points of $B$: a left-most blue point $t$ in the 
segment connecting $v_2$ and $t_1$, a left-most blue point $b$ in the segment connecting
$v_1$ and $b_1$, and both a top-most blue point $\ell$ and a bottom-most blue point $\ell'$ (possibly
equal to $\ell$) in the edge connecting $v_1$ and $v_2$. By the definitions
of $\ell_1$ and $\ell_2$, we have that $\ell$ belongs to the segment 
connecting $\ell_2$ and $v_2$, and $\ell'$ belongs to the segment 
connecting $\ell_1$ and $v_1$.
The key observation is that $b$, $\ell'$, $\ell$, and $t$ are all vertices of $conv(B\cup V_B)$.
Furthermore, the clockwise path along the boundary of $conv(B\cup V_B)$ that connects
$b$ and $\ell''$ is below the path $L$, and the similar path that connects
$\ell$ and $t$ is above the path $U$. 
All of these observations imply that $\pess(N_B)\subseteq conv(B\cup V_B)$.
\end{proof}

\section{Algorithms}\label{sec:algorithms}

In this section, we present our separability testing algorithm for $R$ and $B$ in the
R-tree model. 
We start by presenting the ingredient algorithms for the separability algorithm:
computation of the optimistic and pessimistic convex hulls (Section~\ref{sec:hulls-computation}),
deciding whether the convex hulls are disjoint and finding a separating
line in the positive case (Section~\ref{sec:hull-inter}),
and filtering the sets of rectangles $N_R$ and $N_B$ (Section~\ref{sec:filtering}).
Then, we show the separability algorithm (Section~\ref{sec:separab-algo}).
Finally, we show how the techniques to previous computations can be used to
compute the convex hull of a point set given in an R-tree (Section~\ref{sec:mono-hull-compute}).

\subsection{Convex hulls computation}\label{sec:hulls-computation}

We explain how to compute the optimistic and pessimistic
convex hulls for $N_B$ in both cases
of intersections of $\mbr(R)$ and $\mbr(B)$: corner and side. 
The algorithms to compute these convex hulls for $N_R$
are analogous by symmetry.

Let $t_1$ (resp.\ $t_2$) be a left-most vertex from the top-right (resp.\ top-left) 
vertices of the rectangles of $N_B$ whose top sides are aligned with the top side of $\mbr(B)$;
$b_1$ (resp.\ $b_2$) a left-most vertex from the bottom-right (resp.\ bottom-left) 
vertices of the rectangles of $N_B$ whose bottom sides are aligned with the bottom side of $\mbr(B)$;
$\ell_1$ (resp.\ $\ell'_1$) a bottom-most vertex from the top-left (resp.\ bottom-left) 
vertices of the rectangles of $N_B$ whose left sides are aligned with the left side of $\mbr(B)$; and
$\ell_2$ (resp.\ $\ell'_2$) a top-most vertex from the bottom-left (resp.\ top-left) 
vertices of the rectangles of $N_B$ whose left sides are aligned with the left side of $\mbr(B)$.
All of these points can be found in $O(|N_B|)$ time, by a single pass over the elements of $N_B$ 
(see Figure~\ref{fig:alg}).

Suppose that $\mbr(R)$ and $\mbr(B)$ have a corner intersection.
To compute $\opt(N_B)$, we need to compute the convex hull
of the top-left vertices of the rectangles of $N_B$ and the set $V_B$.
Observe that such vertices that are not in the triangle $T$ with vertex 
set $\{v_2,\ell'_2,t_2\}$ cannot be vertices of $\opt(N_B)$ (see Figure~\ref{fig:alg1}).
Then, we first find in $O(|N_B|)$ time
the set $S$ of the top-left vertices of the rectangles of $N_B$ 
which belong to $T$,
and after that compute $\opt(N_B)=conv(S\cup V_B)$ in $O(|S|\log |S|)$ time,
using a standard algorithm for computing the convex hull.
Doing this, we apply the convex hull algorithm for only the relevant set of points.
Ideas similar to these ones are going to be used in the following.
To compute $\pess(N_B)$, we find in $O(|N_B|)$ time the set $S'$
of the bottom-left and top-right vertices of the rectangles of $N_B$ 
that belong to the triangle with vertices $\{v_2,\ell_2,t_1\}$ (see Figure~\ref{fig:alg1}),
and then compute $\pess(N_B)=conv(S'\cup V_B)$ in $O(|S'|\log |S'|)$ time.

\begin{figure}[t]
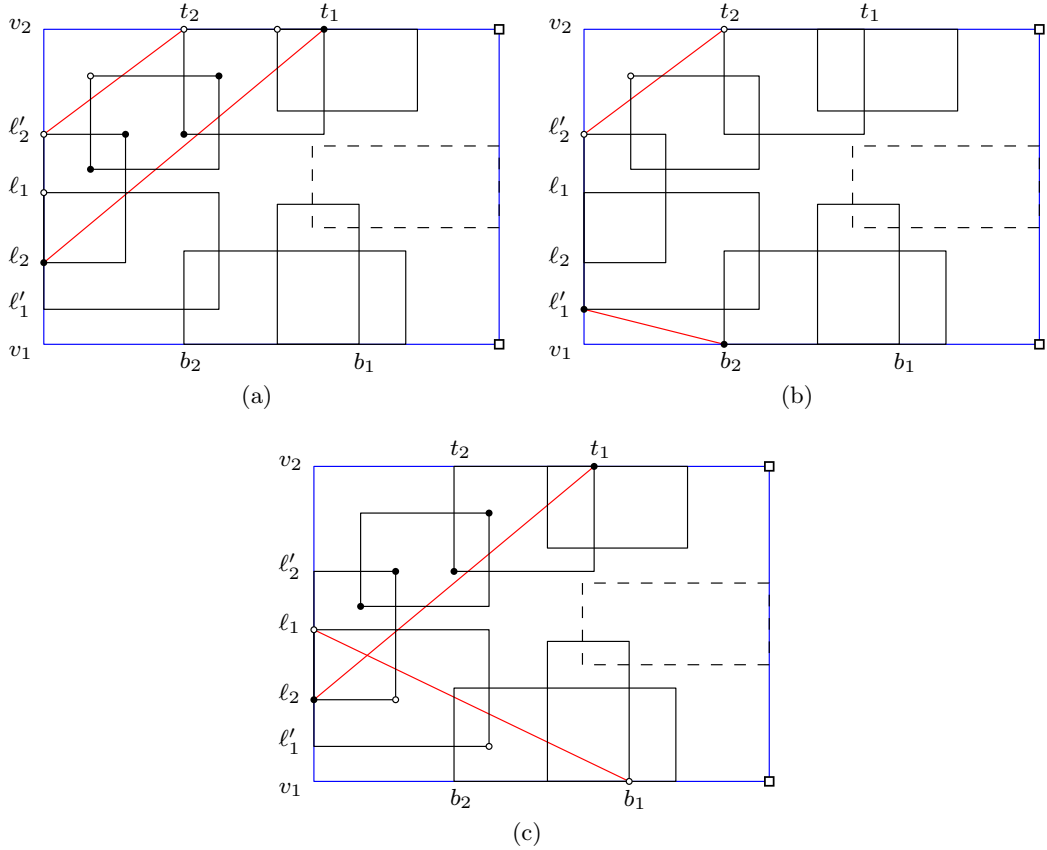

	\centering
	\subfloat[]{
		\includegraphics[scale=1.1,page=27]{figs.pdf}
		\label{fig:alg1}
	}
	\hspace{0.2cm}
	\subfloat[]{
		\includegraphics[scale=1.1,page=28]{figs.pdf}
		\label{fig:alg2}
	}		
	\hspace{0.2cm}
	\subfloat[]{
		\includegraphics[scale=1.1,page=29]{figs.pdf}
		\label{fig:alg3}
	}	
	\caption{\small{
		Algorithm to compute $\opt(N_B)$ and $\pess(N_B)$:
		(a) The filled vertices are the points of $S$, and the hollow vertices are the points
		of $S'$.
		(b) The vertices of the set $S_0$.
		(c) The filled vertices are the points of $S_1$, and the hollow vertices are the points
		of $S_2$.
	}}
	\label{fig:alg}
\end{figure}

When $\mbr(R)$ and $\mbr(B)$ have a side intersection, we proceed as follows:
To compute $\opt(N_B)$, we compute in $O(|N_B|)$ time the point set $S_0$
consisting of the top-left vertices of the rectangles of $N_B$ that belong to the triangle with 
vertices $\{v_2,\ell'_2,t_2\}$, and the bottom-left vertices that belong to the triangle with 
vertices $\{v_1,\ell'_1,b_2\}$ (see Figure~\ref{fig:alg2}). 
Then, we compute $\opt(N_B)=conv(S_0\cup V_B)$ in $O(|S_0|\log |S_0|)$ time.
To compute $\pess(N_B)$, we first find in $O(|N_B|)$ time the set
$S_1$ containing the bottom-left and top-right vertices of the rectangles in $N_B$
that belong to the triangle with vertices $\{v_2,\ell_2,t_1\}$, and the
set $S_2$ containing the top-left and bottom-right vertices of the rectangles in $N_B$
that belong to the triangle with vertices $\{v_1,\ell_1,b_1\}$ (see Figure~\ref{fig:alg3}).
After that, we compute the convex hulls
$C_1=conv(N'_B\cup V_B\cup \{v_1\})=conv(S_1\cup V_B\cup \{v_1\})$ and 
$C_2=conv(N''_B\cup V_B\cup \{v_2\})=conv(S_2\cup V_B\cup \{v_2\})$ in times
$O(|S_1|\log |S_1|)$ and $O(|S_2|\log |S_2|)$, respectively.
Finally, we compute $\pess(N_B)$ as the intersection $C_1\cap C_2$.
If $|C_1|$ and $|C_2|$ denote the numbers of vertices of $C_1$ and $C_2$, respectively,
a representation of $C_1\cap C_2$ can be computed in $O(\log |C_1|\cdot \log|C_2|)$ time
in the worst case:
If the point $\ell_1$ is not above the point $\ell_2$, then the boundary $C_1\cap C_2$ 
consists of the clockwise boundary path $L$ of $C_2$ connecting $b_1$ with $\ell_1$,
the segment connecting $\ell_1$ with $\ell_2$, the clockwise boundary path $U$ of $C_1$ 
connecting $\ell_2$ with $t_1$, and the similar path of $C_1$ that connects $t_1$ with $b_1$.
Hence, in this case, a representation of $C_1\cap C_2$ can be found in $O(1)$ time.
Otherwise, if $\ell_1$ is above $\ell_2$, a representation of $C_1\cap C_2$ is given
by such above paths and the intersection point $p$ between $L$ and $U$. Observe that both $L$ and $U$
are $x$-monotone paths, and in each of them the edges are sorted from left to right. To find $p$, we need
to find the edge of $L$ that intersects $U$. This can be done with a binary search over
the edges of $L$. Given any edge $e$ of $L$, deciding whether $e$ is to the left of $U$,
intersects $U$, or is to the right of $U$, can be done by querying to which side of $U$ is each
endpoint of $e$ (i.e.\ if the endpoint is or not inside $C_1$). Determining whether a given point belongs to 
a convex hull can be done with a binary search on the edges, running in
$O(\log k)$ time, where $k$ is the number of vertices. Then, each query costs
$O(\log |C_1|)$ time, and $O(\log |C_2|)$ queries are performed,
with a total running time of $O(\log |C_1|\cdot \log |C_2|)$.

\subsection{Deciding convex hulls intersection}\label{sec:hull-inter}

We show how to decide whether
$=\opt(N_R)$ and $\opt(N_B)$ are disjoint. A similar procedure
can be applied for $\pess(N_R)$ and $\pess(N_B)$.

Let $C_1=\opt(N_R)$ and $C_2=\opt(N_B)$.
Suppose that $\mbr(R)$ and $\mbr(B)$ have a corner intersection.
By Lemma~\ref{lem:VR-VB-inter}, we can orient clockwise the edges of $C_2$ and consider 
the sequence $S$ of consecutive edges that starts with the edge with source endpoint the 
bottom-left vertex of $\mbr(B)$, and ends with the edge with target endpoint the
top-right vertex of $\mbr(B)$. The sequence $S$ consists of three intervals: consecutive 
edges outside $C_1$ which point to $C_1$, consecutive edges which have at least one
endpoint inside $C_1$, and consecutive edges outside $C_1$ which do not point to $C_1$.
If the first edge of $S$ does not point to $C_1$, then $C_1$ and $C_2$ are disjoint.
In general, given any edge of $C_2$, querying to which interval of $S$ the edge belongs to
can be done with a binary search on the edges of $C_1$ from the top-right vertex to the
bottom-left vertex clockwise, in $O(\log|C_1|)$ time.
Furthermore, deciding whether there exists an edge of $S$ that has at least one endpoint
inside $C_1$ can be done with a binary search in $S$. The search performs $O(\log |C_2|)$
queries, and each query will cost $O(\log|C_1|)$ time. Deciding whether $C_1$ and $C_2$
are disjoint can thus be done in $O(\log|C_1|\cdot \log|C_2|)$ time.
To find a line separating $C_1$ and $C_2$, in the case where they are disjoint, we need to find the first
edge $e$ in $S$ that does not point to $C_1$. This edge can be found, similarly as above, in 
$O(\log|C_1|\cdot \log|C_2|)$ time. Let $\overline{e}$ denote the same edge $e$, but oriented
in the contrary direction. If $\overline{e}$ does not point to $C_1$, then the straight line
through $e$ is a separating line. Otherwise, in $O(\log|C_1|)$ time we can find the edge $e'$ of
$C_1$ pointed by $\overline{e}$, and the line through $e'$ is a separating line.

When $\mbr(R)$ and $\mbr(B)$ have a side intersection, a similar sequence $S$ can be considered.
In this case, $S$ is the sequence of consecutive edges that starts with the edge with source endpoint the 
bottom-right vertex of $\mbr(B)$, and ends with the edge with target endpoint the
top-right vertex of $\mbr(B)$. Deciding whether $C_1$ and $C_2$
are disjoint can be done in $O(\log|C_1|\cdot \log|C_2|)$ time.

\subsection{Filtering rectangles}\label{sec:filtering}

We show how to filter the rectangles of $N_R$ and $N_B$,
that is, to refine these sets by removing some elements, so that the
new $N_R$ and $N_B$ still satisfy the properties (1) and (2) of Definition~\ref{def:NR-NB}.
We use the natural way of removing rectangles, say from $N_B$, which consists in removing 
all rectangles completely contained
in the pessimistic convex hull (see the rectangle
in dashed lines in Figure~\ref{fig:NB6} and Figure~\ref{fig:pess-side}). 
If the rectangle has a part outside the pessimistic
convex hull, then it cannot be removed because such a part could contain blue points 
that are vertices of $conv(B\cup V_B)$. 

Note that a rectangle $N$ of $N_B$ is contained in 
$\pess(N_B)$ if and only if the two left vertices of $N$
belong to $\pess(N_B)$.
Thus, once we have computed $\pess(N_B)$, determining whether $N$ is contained
in $\pess(N_B)$ can be done by querying twice whether a point belongs to $\pess(N_B)$. In this
case, the two points are the two left vertices of $N$. Each query
runs in $O(\log k)$ time, where $k$ is the number of vertices of $\pess(N_B)$.
The running time to filter the rectangles of $N_B$ is 
then $O(|N_B|\cdot \log k)=O(|N_B|\cdot \log |N_B|)$.
Symmetrically,
a rectangle $N$ of $N_R$ is contained in 
$\pess(N_R)$ if and only if the two right vertices of $N$
belong to $\pess(N_R)$, and similar decision and filtering algorithms can be 
used, where the filtering algorithm runs in $O(|N_R|\cdot \log |N_R|)$ time.

\subsection{Separability algorithm}\label{sec:separab-algo}

The algorithm consists of an outer procedure and an inner procedure. 
The outer procedure (see the pseudocode {\tt DecideSeparability} of Figure~\ref{fig:algorithm0})
receives $R$ and $B$ as input, both represented in R-trees,
and decide the linear separability of $R$ and $B$. It also returns a separating line
in the positive case. In this procedure, we first initialize the rectangle set
$N_R$ as the rectangles contained in the root node of the R-tree of $R$, and the 
rectangle set $N_B$ as the rectangles contained in the root node of the R-tree of $B$.
This allows to compute both $\mbr(R)=\mbr(N_R)$ and $\mbr(B)=\mbr(N_B)$. 
Then, we proceed as follows:
If the intersection between $\mbr(R)$ and $\mbr(B)$ is empty, then we return a `yes'
answer together with an axis-aligned line containing an edge of $\mbr(R)$ that separates $R$ and $B$.
If $\mbr(R)$ and $\mbr(B)$ have a piercing intersection, then we return a 
`no' answer.
Otherwise, if $\mbr(R)$ and $\mbr(B)$ have a containment, corner, or side intersection,
then the inner procedure (see the pseudocode {\tt DecideSeparabilityCS} of Figure~\ref{fig:algorithm})
is called accordingly. This procedure decides the linear separability when 
$\mbr(R)$ and $\mbr(B)$ have corner or side intersection. Recall that when 
$\mbr(R)$ and $\mbr(B)$ have a containment intersection, the linear separation question
can be reduced to solve (at most) four instances of the same question in which 
$\mbr(R)$ and $\mbr(B)$ have a corner intersection (see Section~\ref{sec:preliminaries}).
This is done by extending the inner rectangle to contain one vertex of the outer rectangle.
The inner procedure is as follows:

We start by computing the vertex sets $V_R$ and $V_B$, according to 
the relative positions of $\mbr(R)$ and $\mbr(B)$, which are necessary to the algorithm 
to distinguish between a corner and a side intersection. Then, the following actions with
$N_R$ and $N_B$ are performed. We compute both $\opt(N_R)$ and $\opt(N_B)$ 
(see Section~\ref{sec:hulls-computation}), and test
whether $\opt(N_R)$ and $\opt(N_B)$ are disjoint (see Section~\ref{sec:hull-inter}).
If they are disjoint, then we report a `yes' answer and find a separating line 
(see Section~\ref{sec:hull-inter}). Otherwise, if $\opt(N_R)$ and $\opt(N_B)$ are
not disjoint, we continue as follows.
We compute both $\pess(N_R)$ and $\pess(N_B)$, and decide whether they are disjoint
(see sections~\ref{sec:hulls-computation} and~\ref{sec:hull-inter}).
If the are not disjoint, then we report a `no' answer. Otherwise, 
for each point set $X\in\{R,B\}$ such that $N_X$ is made of rectangles,
we filter $N_X$ (see Section~\ref{sec:filtering}),
and replace each remaining rectangle in $N_X$ by its child rectangles, or points, in
the corresponding R-tree. Observe that the new rectangles of $N_X$ are all a level down to the level of the former
rectangles in $N_X$. If at least one of the new $N_R$ and $N_B$ is made of
rectangles (it can happen that one of $N_R$ and $N_B$ is made of rectangles and the other one 
is made of points since the R-trees of $R$ and $B$ can have different heights),
then we repeat all this actions with these new $N_R$ and $N_B$. Otherwise,
if both $N_R$ and $N_B$ are made of points, we test whether $\opt(N_R)=conv(R)$ and
$\opt(N_B)=conv(B)$ are disjoint, and find a separating line in the positive case,
to finally decide whether $R$ and $B$ are linearly separable.

\begin{figure}
	\small
	\begin{framed}
	\begin{algorithm}{}{$\mathtt{DecideSeparability}(R,B)$:}
		$N_R$ $\leftarrow$ the set of rectangles in the root node of the R-tree of $R$\\
		$N_B$ $\leftarrow$ the set of rectangles in the root node of the R-tree of $B$\\
		$\mbr(R)\leftarrow \mbr(N_R)$\\
		$\mbr(B)\leftarrow \mbr(N_B)$\\
		\qif $\mbr(R)\cap \mbr(B)$ is empty \qthen\\
			\qreturn \qtrue
		\qelseif $\mbr(R)$ and $\mbr(B)$ have a piercing intersection \qthen\\
			\qreturn \qfalse
		\qelseif $\mbr(R)$ and $\mbr(B)$ have a containment intersection \qthen\\
			\qif $\mbr(B)\subset \mbr(R)$ \qthen\\
				\qfor {\bf each} vertex $v$ of $\mbr(R)$ \qdo\\
					\qcom{$\mbr(R)$ and $\mbr(\mbr(B)\cup \{v\})$ have a corner intersection}\\
					\qif $\mathtt{DecideSeparabilityCS}(N_R,N_B,\mbr(R),\mbr(\mbr(B)\cup \{v\}))$ \qthen\\
						\qreturn \qtrue
					\qfi
				\qrof\\
				\qreturn \qfalse
			\qelse\\
				\qfor {\bf each} vertex $v$ of $\mbr(B)$ \qdo\\
					\qcom{$\mbr(\mbr(R)\cup \{v\})$ and $\mbr(B)$ have a corner intersection}\\
					\qif $\mathtt{DecideSeparabilityCS}(N_R,N_B,\mbr(\mbr(R)\cup \{v\}),\mbr(B))$ \qthen\\
						\qreturn \qtrue
					\qfi
				\qrof\\
				\qreturn \qfalse
			\qfi
		\qelse\\
			\qcom{$\mbr(R)$ and $\mbr(B)$ have a corner or side intersection}\\
			\qreturn $\mathtt{DecideSeparabilityCS}(N_R,N_B,\mbr(R),\mbr(B))$
		\qfi
	\end{algorithm}
	\end{framed}
	\caption{\small{
		Algorithm to compute the linear separability of $R$ and $B$.
	}}
	\label{fig:algorithm0}
\end{figure}

\begin{figure}
	\small
	\begin{framed}
	\begin{algorithm}{}{$\mathtt{DecideSeparabilityCS}(N_R,N_B,\mbr(R),\mbr(B))$:}
		Compute $V_R$ and $V_B$ according to $\mbr(R)$ and $\mbr(B)$\\
		$\mathtt{opt}_R  \leftarrow \mathsf{OptimisticConvexHull}(N_R,V_R)$\\
		$\mathtt{opt}_B  \leftarrow \mathsf{OptimisticConvexHull}(N_B,V_B)$\\
		\qwhile (\qtrue) \qdo\\
			\qif $\mathtt{opt}_R$ and $\mathtt{opt}_B$ are disjoint \qthen\\
				\qreturn \qtrue
			\qelseif ($N_R$ is made of points) \qand ($N_B$ is made of points) \qthen\\
				\qreturn \qfalse
			\qelse \\
				\qif $N_R$ is made of rectangles \qthen\\
					$\mathtt{pess}_R  \leftarrow \mathsf{PessimisticConvexHull}(N_R,V_R)$
				\qelse\\
					$\mathtt{pess}_R \leftarrow \mathtt{opt}_R$
				\qfi\\
				\qif $N_R$ is made of rectangles \qthen\\
					$\mathtt{pess}_B  \leftarrow \mathsf{PessimisticConvexHull}(N_B,V_B)$
				\qelse\\
					$\mathtt{pess}_B \leftarrow \mathtt{opt}_B$
				\qfi\\
				\qif $\mathtt{pess}_R$ and $\mathtt{pess}_B$ are not disjoint \qthen\\
					\qreturn \qfalse
				\qelse\\
					\qif $N_R$ is made of rectangles \qthen\\
						$N_R \leftarrow \mathsf{Filter}(N_R)$\\
						$N_R \leftarrow \bigcup_{N\in N_R}\mathsf{children}(N)$\\
						$\mathtt{opt}_R  \leftarrow \mathsf{OptimisticConvexHull}(N_R,V_R)$
					\qfi\\
					\qif $N_B$ is made of rectangles \qthen\\
						$N_B \leftarrow \mathsf{Filter}(N_B)$\\
						$N_B \leftarrow \bigcup_{N\in N_B}\mathsf{children}(N)$\\
						$\mathtt{opt}_B  \leftarrow \mathsf{OptimisticConvexHull}(N_B,V_B)$
					\qfi
				\qfi
			\qfi
		\qend
	\end{algorithm}
	\end{framed}
	\caption{\small{
		Algorithm to compute the linear separability of $R$ and $B$ when
		$\mbr(R)$ and $\mbr(B)$ have a corner or side intersection:
	  	$\mathsf{OptimisticConvexHull}(\cdot)$ returns the optimistic convex 
		hull; $\mathsf{PessimisticConvexHull}(\cdot)$ returns the pessimistic
		convex hull; $\mathsf{Filter}(\cdot)$ removes from the argument
		the rectangles contained in the pessimistic convex hull; and
		$\mathsf{children}(N)$ returns the rectangles (or points) that are
		children of the rectangle $N$ in the corresponding R-tree.
	}}
	\label{fig:algorithm}
\end{figure}

In the following, we analyse the asymptotic running time of the algorithm in the worst case.
Let $m=|R|$ and $n=|B|$, and $r=O(\log m)$ and $b=O(\log n)$ denote the heights of the R-trees of $R$ and $B$, respectively.
Assume $r\le b$ without loss of generality.  
For $i=0,1,\ldots,r$, let $m_i$ denote the number of rectangles in the $i$-th level of
the R-tree of $R$, and $N^{(i)}_R$ the set of rectangles $N_R$ when $N_R$ is formed
by rectangles of the $i$-th level, where $|N^{(i)}_R|\le m_i$ and $m_{r}=m$.
Similarly, for $j=0,1,\ldots,b$, let $n_j$ denote the number of rectangles in 
the $j$-th level of the R-tree of $B$, and $N^{(j)}_B$ the set of rectangles $N_B$ when $N_B$ is formed
by rectangles of the $j$-th level, where $|N^{(j)}_B|\le n_j$ and $n_{b}=n$.
For level number $i=0,1,\ldots,r$, the algorithm in the worst case: 
\begin{itemize}
	\item computes both $\opt(N^{(i)}_R)$ and $\pess(N^{(i)}_R)$ in $O(|N^{(i)}_R|\cdot \log|N^{(i)}_R|)$ time;

	\item computes both $\opt(N^{(i)}_B)$ and $\pess(N^{(i)}_B)$ in $O(|N^{(i)}_B|\cdot \log|N^{(i)}_B|)$ time;

	\item decides $\opt(N^{(i)}_R)\cap \opt(N^{(i)}_B)=\emptyset$ 
	in $O(\log |N^{(i)}_R| \cdot \log|N^{(i)}_B|)$ time;

	\item decides $\pess(N^{(i)}_R)\cap \pess(N^{(i)}_B)=\emptyset$ 
	in $O(\log |N^{(i)}_R| \cdot \log|N^{(i)}_B|)$ time;

	\item filters $N^{(i)}_R$ in $O(|N^{(i)}_R|\cdot \log|N^{(i)}_R|)$ time, for $i<r$; and

	\item filters $N^{(i)}_B$ in $O(|N^{(i)}_B|\cdot \log|N^{(i)}_B|)$ time, for $i<b$.
\end{itemize}
For level number $j=r+1,\ldots,b$, the algorithm in the worst case: 
\begin{itemize}
	\item computes both $\opt(N^{(j)}_B)$ and $\pess(N^{(j)}_B)$ in $O(|N^{(j)}_B|\cdot \log|N^{(j)}_B|)$ time;

	\item decides $\opt(N^{(r)}_R)\cap \opt(N^{(j)}_B)=\emptyset$ 
	in $O(\log |N^{(r)}_R| \cdot \log|N^{(j)}_B|)$ time;

	\item decides $\pess(N^{(r)}_R)\cap \pess(N^{(j)}_B)=\emptyset$ 
	in $O(\log |N^{(r)}_R| \cdot \log|N^{(j)}_B|)$ time; and

	\item filters $N^{(j)}_B$ in $O(|N^{(j)}_B|\cdot \log|N^{(j)}_B|)$ time, for $j<b$.
\end{itemize}
Summing up, the running time in the worst case is:
\begin{eqnarray*}
	&   & \sum_{i=0}^{r}\left(O\left(|N^{(i)}_R|\cdot \log|N^{(i)}_R|\right) + 
				O\left(|N^{(i)}_B|\cdot \log|N^{(i)}_B|\right)  +
				O\left(\log |N^{(i)}_R|\cdot \log|N^{(i)}_B|\right)\right) + \\
	&   & \sum_{i=r+1}^{b}\left(O\left(|N^{(i)}_B|\cdot \log|N^{(i)}_B|\right) +
				O\left(\log|N^{(r)}_R|\cdot \log|N^{(i)}_B|\right)\right)\\
	& = &  \sum_{i=0}^{r}\Bigl(O(m_i\log m_i) +O(n_i\log n_i) + O(\log m_i\log n_i)\Bigr) + 
	       \sum_{i=r+1}^{b}\Bigl(O(n_i \log n_i)+O(\log m \log n_i)\Bigr) \\
	& = &  \sum_{i=0}^{r}\Bigl(O(m_i\log m_i) +O(n_i\log n_i)\Bigr) + 
	       \sum_{i=r+1}^{b}\Bigl(O(n_i \log n_i)+O(\log m \log n_i)\Bigr)\\
	& = &  O\left(\sum_{i=0}^{r} m_i\log m_i + \sum_{i=0}^{b} n_i\log n_i + 
	       \sum_{i=r+1}^{b} \log m \log n_i\right) \\
	& = &  O\left(\log m \cdot \sum_{i=0}^{r} m_i + \log n \cdot \sum_{i=0}^{b} n_i + (b-r)\log m \log n\right) \\
	& = &  O\left(\log m \cdot \sum_{i=0}^{r} m_i + \log n \cdot \sum_{i=0}^{b} n_i + \log m \log^2 n\right).
\end{eqnarray*}
Since an R-tree has the property that every node contains at least two children nodes,
we have that $m_{r}=m$, $m_{r-1}\le m/2$, $m_{r-2}\le m/4$, $m_{r-3}\le m/8$, and so on.
That is, $m_i\le m/2^{r-i}$ for $i=0,1,\ldots,r$.
Similarly, $n_j\le n/2^{b-j}$ for $j=0,1,\ldots,b$. The above running time is then:
\begin{eqnarray*}
   O\left(\log m \cdot \sum_{i=0}^{r}m/2^i+ \log n \cdot \sum_{i=0}^{b}n/2^i + \log m \log^2 n\right)
   & = & O\left(m \log m +  n\log n + \log m \log^2 n\right) \\
   & = & O(m\log m + n\log n).
\end{eqnarray*}

The worst case of our algorithm occurs when all nodes of the two R-trees, and all rectangles and points,
need to be loaded to decide the linear separability of $R$ and $B$. This happens in the following
example. Suppose that all elements of $B$ belong to the line $y=x$, for example, $B=\{(i,i):i=1,2,\ldots,n\}$,
and that $|R|=|B|=n$ with $R=\{(i-\varepsilon,i+\varepsilon):i=1,2,\ldots,n\}$ for $\varepsilon=1/2$. 
In this case, $R$ and $B$ are linearly separable, and 
$\mbr(R)$ and $\mbr(B)$ have a corner intersection, where $\mbr(R)$ contains the top-left vertex of
$\mbr(B)$, and $\mbr(B)$ contains the bottom-right vertex of $\mbr(R)$. Consider any
step of our algorithm (refer to Figure~\ref{fig:algorithm}), with $N_R$ and $N_B$
representing the rectangles from the R-trees of $R$ and $B$, respectively. 
Observe that in every rectangle of $N_B$ the diagonal connecting the bottom-left vertex
with the top-right vertex is contained in the line $y=x$. This implies that the pessimistic convex hull 
$\pess(N_B)$ equals the triangle with vertices $(1,1)$, $(n,1)$, and $(n,n)$, and that no rectangle of $N_B$ is contained
in $\pess(N_B)$. A similar situation occurs with $N_R$: the pessimistic convex hull $\pess(N_R)$ equals
the triangle with vertices $(1-\varepsilon,1+\varepsilon)$, $(1-\varepsilon,n+\varepsilon)$, 
and $(n-\varepsilon,n+\varepsilon)$, and no rectangle of $N_R$ is contained
in $\pess(N_R)$. Then, no rectangle of $N_R$ or $N_B$ can be discarded in any step of the algorithm.
Furthermore, $\pess(N_R)$ and $\pess(N_B)$ are disjoint, whereas 
the optimistic convex hulls $\opt(N_R)$ and $\opt(N_B)$ are not disjoint (due to the way we choose $\varepsilon$).
All of these observations imply that the algorithm will stop alfter loading all points, hence all 
nodes and rectangles, from both R-trees.

\subsection{The convex hull of a point set}\label{sec:mono-hull-compute}

Let $P$ be a finite point set in the plane, given in an R-tree. In this section,
we present an algorithm to compute $conv(P)$. Let $N_P$ be a set of rectangles
from the R-tree of $P$ such that $N_P$ satisfies $conv(P)\subseteq conv(N_P)$
(similar as Definition~\ref{def:NR-NB}). To compute $conv(P)$, we first
discard rectangles $N$ from $N_P$ such that all points of $P$ contained
in $N$ are not vertices of $conv(P)$. After that, we replace each rectangle $N$
that remains in $N_P$ by its child rectangles (or points) in the R-tree. We stop
when all elements of $N_P$ are points, and return $conv(N_P)$. 
Let $v_1$, $v_2$, $v_3$, and $v_4$ be the top-left, bottom-left, bottom-right, and top-right
vertices of $\mbr(P)$, respectively.
To discard rectangles
from $N_P$, we compute the following convex hulls by following ideas similar to those
given in Section~\ref{sec:hulls-computation}: 
$C_1=conv(\{\SE(N)\mid N\in N_P\} \cup \{v_2,v_3,v_4\})$,
$C_2=conv(\{\NE(N)\mid N\in N_P\} \cup \{v_1,v_3,v_4\})$,
$C_3=conv(\{\NW(N)\mid N\in N_P\} \cup \{v_1,v_2,v_4\})$, and
$C_4=conv(\{\SW(N)\mid N\in N_P\} \cup \{v_1,v_2,v_3\})$.
Given a rectangle $N$ of $N_P$, if $N$ is contained in the intersection
$C_1\cap C_2\cap C_3\cap C_4$, then no point of $P$ contained in $N$ 
can be a vertex of $conv(P)$. Furthermore, $N$ is contained in 
$C_1\cap C_2\cap C_3\cap C_4$ if and only if the top-left,
bottom-left, bottom-right, and top-right vertices of $N$ are contained
in $C_1$, $C_2$, $C_3$, and $C_4$, respectively. Once such four convex hulls are computed, 
these four decisions can be made in times $O(\log |C_1|)$, $O(\log |C_2|)$,
$O(\log |C_3|)$, and $O(\log |C_4|)$, respectively.
The algorithm to compute $conv(P)$ is described in the
pseudocode of Figure~\ref{fig:convexhull-alg}. The running time
is $O(n\log n)$, where $n$ is the number of points, and can
be obtained from arguments similar to those of Section~\ref{sec:separab-algo}.

\begin{figure}
	\small
	\begin{framed}
	\begin{algorithm}{}{$\mathtt{ConvexHull}(P)$:}
		$N_P$ $\leftarrow$ the set of rectangles in the root node of the R-tree of $P$\\
		$\mbr(P)\leftarrow \mbr(N_P)$\\
		\qrepeat\\
			$C_1 \leftarrow conv(\{\SE(N)\mid N\in N_P\} \cup \{v_2,v_3,v_4\})$\\
			$C_2 \leftarrow conv(\{\NE(N)\mid N\in N_P\} \cup \{v_1,v_3,v_4\})$\\
			$C_3 \leftarrow conv(\{\NW(N)\mid N\in N_P\} \cup \{v_1,v_2,v_4\})$\\
			$C_4 \leftarrow conv(\{\SW(N)\mid N\in N_P\} \cup \{v_1,v_2,v_3\})$\\
			$N_P \leftarrow N_P \setminus \{N\in N_P\mid N\subset C_1\cap C_2\cap C_3\cap C_4\}$\\
			$N_P \leftarrow \bigcup_{N\in N_P}\mathsf{children}(N)$
		\qendrepeat\\
		\quntil $N_P$ is made of points\\
		\qreturn $conv(N_P)$
	\end{algorithm}
	\end{framed}
	\caption{\small{
		Algorithm to compute the convex hull of a point set $P$
		given in an R-tree.
	}}
	\label{fig:convexhull-alg}
\end{figure}

\section{Experimental results}\label{sec:experiments}

In this section, we describe the experiments that we implemented to evaluate the performance of our separability testing algorithm in terms of running time (via counting the number of access to nodes of the R-trees) and memory usage. The algorithm was implemented in the \texttt{C++} language, using the implementation of the R-tree data structure of the library \texttt{LibSpatialIndex}~\cite{LibSpatialIndex}. Nodes of size 1K were used to build the R-trees. The experiments were executed in a {\tt  Lenovo ThinkPad x240} computer, with 8GB of {\tt RAM} memory, and an {\tt Intel Core i5 4300U} microprocessor, and both real and synthetic data were considered. 

\subsection{Real data}

We consider a first data set consisting of 200 thousands of MBRs representing spatial objects of California roads, and a second data set consisting of 2.2 millions of MBRs representing spatial objects of rivers of Iowa, Kansas, Missouri, and Nebraska~\cite{realdata}. From the first set we generate a set of red points by taking from each MBR its center point, and perform a similar operation to the second data set to obtain a set of blue points. To test our algorithm, we mapped both point sets to the space $[0,1]\times[0,1]$. Each colored point set was stored in a different R-tree, and the numbers of blocks used in the R-trees are shown in Table~\ref{tabla_resultados_reales}. In Figure~\ref{fig:reales-a}, we make a graphic representation of both colored point sets, where we plot only about the 10\% of the points of each set. In Figure~\ref{fig:reales-b}, we draw the MBR of each colored point set, showing that they have a side intersection with a considerably high overlapping, precisely, the common area of the MBRs is above the 98\% of the total area.
In Table~\ref{tabla_resultados_reales} we show the results of the execution of our linear separability testing algorithm: a 2.79\% of the nodes (i.e.\ memory blocks) of the red R-tree are accessed, whereas a 1.36\% of the nodes of the blue R-tree are accessed. To solve this particular instance of the problem, only 40Kb of main memory is required.

\begin{figure}[ht]
	\centering
	\subfloat[] {
		\includegraphics[scale = 0.42]{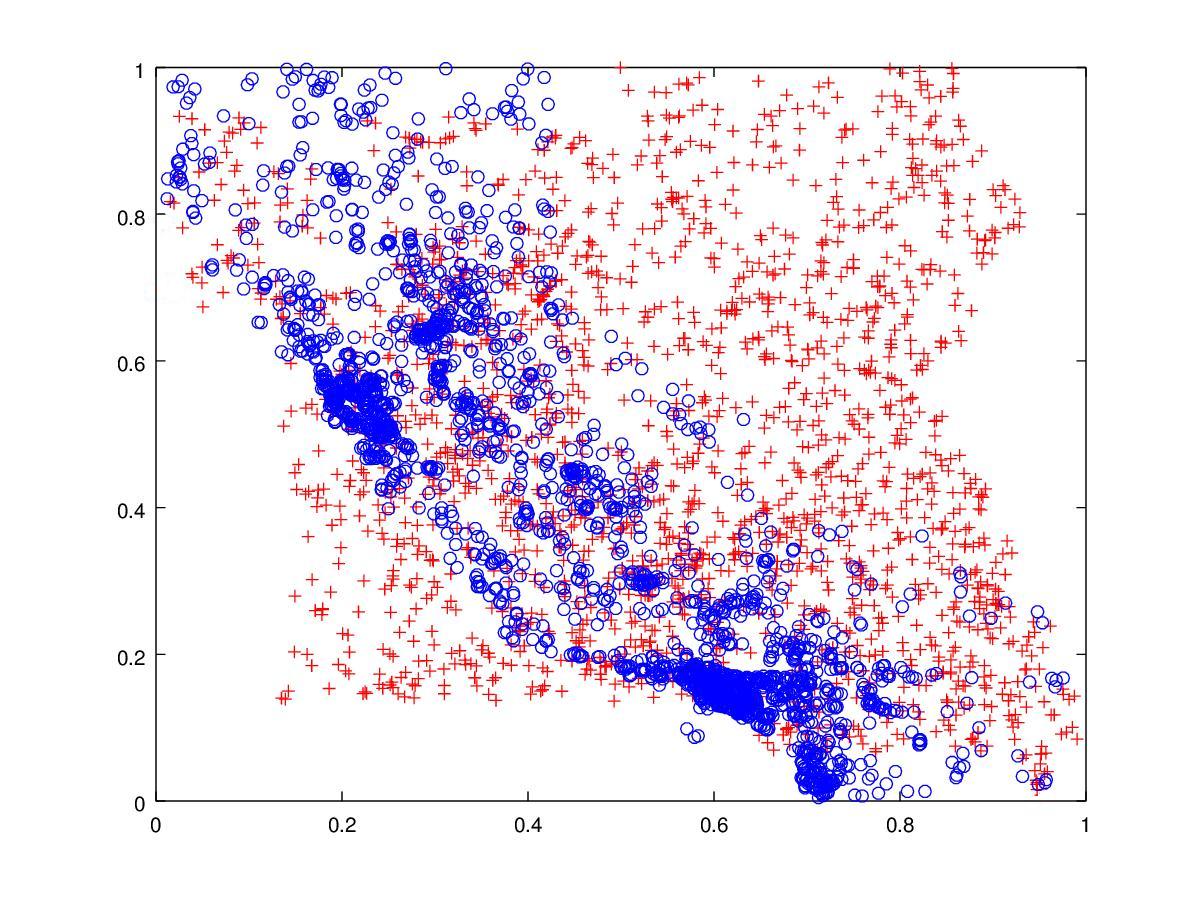}
        \label{fig:reales-a}
	}
	\subfloat[]{
		\includegraphics[scale = 0.42]{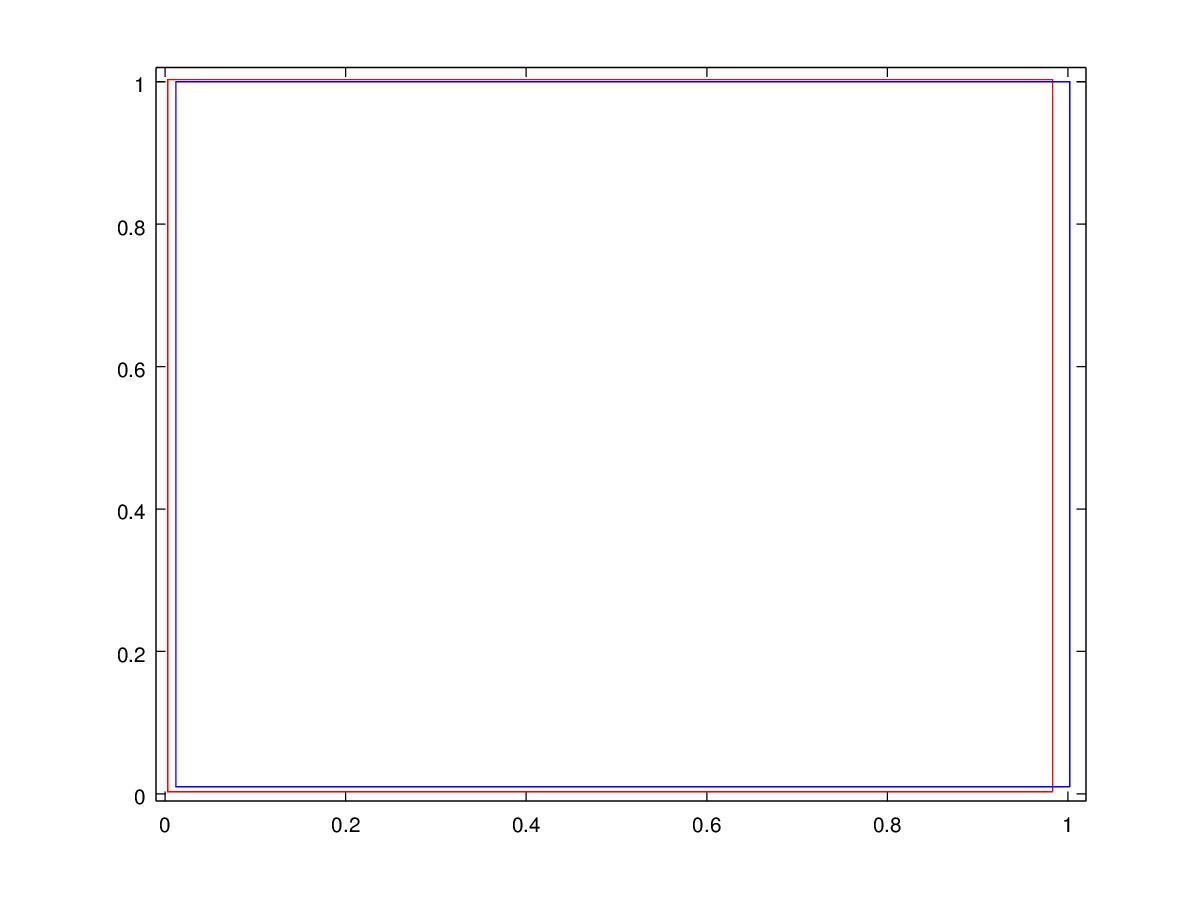}
        \label{fig:reales-b}
	}
	\caption{\small{
    	a) Colored point sets generated from real data. Each 
    	red points is represented by a $+$, and each blue point by a $\circ$.
    	b) The MBR of each point set, where the MBRs has a side intersection and their
    	intersection area is above the 98\% of the total area.
    }}  
	\label{fig_reales}
\end{figure}

\begin{table}[ht]
	\centering
	\begin{tabular}{|l|r|c|c|}
		\hline
		 & size     & \% nodes accessed & size of R-tree (\# of blocks)\\
		\hline
		Red points & 200,000            & 2.79\%          & 12,178                  \\
		\hline
		Blue points & 2,200,000          & 1.36\%          & 35,965                  \\
		\hline
	\end{tabular}
	\caption{\small{Results of experiments on real data.}}
	\label{tabla_resultados_reales}
\end{table}

\subsection{Synthetic data}

We extecute our algorithm on several synthetic data sets, each data set consisting of colored point sets randomly generated in the range $[0,1]\times[0,1]=[0,1]^2$. To generate a point set $R\cup B$, we proceed in the following steps:
\begin{enumerate}
\item We pick the number of points that each color class will contain. This number is either 1, 2, 5, or 10 millions.

\item We define two rectangles $R_r,R_b\subset [0,1]^2$ of equal areas, so that $R$ and $B$ will be generated inside $R_r$ and $R_b$, respectively, and the area of $R_r\cap R_b$ represents a given percent of the areas of $R_r$ and $R_b$. We select such a percent among 1\%, 5\%, 10\%, and 50\%. Furthermore, we also fix the type of intersection of $R_r$ and $R_b$ (hence the type of intersection of $\mbr(R)$ and $\mbr(B)$): corner or side. 

\item We select the distribution in which each point set ($R$ and $B$) is generated inside its corresponding rectangle. We consider two distributions: uniform and Gaussian.
\end{enumerate}
In total, we run our algorithm on 64 synthetic data sets, accounting from 4 possibilities for the generated number of points, times 4 percents of intersection area, times 2 types of intersection of $\mbr(R)$ and $\mbr(B)$, times 2 distributions. 
In Figure~\ref{fig_graficos_mbr_gauss}, we show examples of colored point sets generated with a Gaussian distribution having their MBRs a corner intersection.

For each possible number of points in the generated point sets, the average size of the R-trees, expressed as the number of disk blocks (i.e.\ nodes), is shown in Table~\ref{tabla_bloques_rtree}. We also measured the percentage of the nodes of the R-trees that were accessed by our algorithm, as shown in Figure~\ref{fig_graficos}, Table~\ref{tabla_access_nodes_uniform}, and Table~\ref{tabla_access_nodes_gauss}.

\begin{table}[]
	\centering
	\begin{tabular}{|r|r|r|r|}
		\hline
		Size of point sets & Red R-tree & Blue R-tree & Total    \\
		(in millions) & (disk blocks)  & (disk blocks) &    (disk blocks)  \\
		\hline
		\hline
		1      & 61,386         & 61,388          & 122,774   \\
		\hline
		2      & 122,210        & 122,226         & 244,436   \\
		\hline
		5      & 304,494        & 304,858         & 609,352    \\
		\hline
		10     & 608,671        & 609,749         & 1,218,420  \\
		\hline
	\end{tabular}
	\caption{\small{Number of blocks (i.e.\ nodes) used in average to build the R-trees.}}
	\label{tabla_bloques_rtree}
\end{table}

\begin{table}[]
	\centering
	\begin{tabular}{|c|c|c|c|c|c|c|c|c|}
		\hline
		\multirow{4}{*}{\begin{tabular}{r}Size of point sets \\ (in millions)\end{tabular}} & \multicolumn{8}{c|}{Type of intersection}                                   \\ \cline{2-9} 
                      & \multicolumn{4}{c|}{Corner}          & \multicolumn{4}{c|}{Side}            \\ \cline{2-9} 
                      & \multicolumn{4}{c|}{\% of intersection} & \multicolumn{4}{c|}{\% of intersection} \\ \cline{2-9} 
                      & 1\%      & 5\%     & 10\%   & 50\%   & 1\%     & 5\%     & 10\%    & 50\%   \\ \hline
		\hline
		1                     & 0.54     & 0.31    & 0.41   & 0.34   & 0.34    & 1.14    & 1.08    & 2.11   \\ \hline
		2                     & 0.23     & 0.13    & 0.09   & 0.35   & 0.22    & 0.39    & 0.58    & 0.19   \\ \hline
		5                     & 0.08     & 0.11    & 0.03   & 0.05   & 0.12    & 0.36    & 0.28    & 0.36   \\ \hline
		10                    & 0.18     & 0.03    & 0.03   & 0.44   & 0.09    & 0.18    & 0.2     & 0.27   \\ \hline
	\end{tabular}
	\caption{\small{Percentage of the nodes of the R-trees that were accessed, 
    in the cases where the points were generated by using the uniform distribution.}}
	\label{tabla_access_nodes_uniform}
\end{table}

\begin{table}[ht]
	\centering
	\begin{tabular}{|c|c|c|c|c|c|c|c|c|}
		\hline
		\multirow{4}{*}{\begin{tabular}{r}Size of point sets \\ (in millions)\end{tabular}} & \multicolumn{8}{c|}{Type of intersection}                                   \\ \cline{2-9} 
                      & \multicolumn{4}{c|}{Corner}            & \multicolumn{4}{c|}{Side}                \\ \cline{2-9} 
                      & \multicolumn{4}{c|}{\% of intersection} & \multicolumn{4}{c|}{\% of intersection} \\ \cline{2-9} 
                      & 1\%     & 5\%     & 10\%    & 50\%   & 1\%     & 5\%     & 10\%    & 50\%   \\ \hline
        \hline
		1                     & 0.45    & 0.13    & 0.41    & 0.13   & 0.41    & 0.16    & 0.29    & 0.01   \\ \hline
		2                     & 0.25    & 0.24    & 0.23    & 0.17   & 0.18    & 0.08    & 0.09    & 0.19   \\ \hline
		5                     & 0.03    & 0.03    & 0.03    & 0.03   & 0.08    & 0.00       & 0.01    & 0.00   \\ \hline
		10                    & 0.06    & 0.05    & 0.06    & 0.05   & 0.01    & 0.00       & 0.01    & 0.00      \\ \hline
	\end{tabular}
	\caption{\small{Percentage of the nodes of the R-trees that were accessed, 
    in the cases where the points were generated by using the Gaussian distribution.}}
	\label{tabla_access_nodes_gauss}
\end{table}

The results indicate that when points are generated by a uniform distribution, the number of R-tree nodes accessed increases as the percentage of intersection between $R_r$ and $R_b$ is increased (see Figure~\ref{fig:corner-uniforme} and Figure~\ref{fig:side-uniforme}).
For example, for 1 million of points generated and a 50\% of area of intersection, and a side intersection, it is needed to access to a $2.11\%$ of the nodes of the R-trees, whereas under the same conditions but a 1\% of intersection, it is needed to access a $0.34\%$ of the nodes.
For point sets generated with the Gaussian distribution (Figure~\ref{fig:corner-gauss} and Figure~\ref{fig:side-gauss}), the percentage of nodes accessed seems to not depend on the percentage of intersection. For example, for 1 million of points generated, if the \% of area of intersection between the MBRs is 1.0\% or 10.0\%, it is needed to access a 0.45\% or 0.41\% of the nodes, respectively.
In Figures~\ref{fig:corner-uniforme},~\ref{fig:side-uniforme},~\ref{fig:corner-gauss}, and~\ref{fig:side-gauss}, we can note that when the number of points generated is increased, the percentage of nodes accessed decreases, being in all cases less that 0.1\%.

In Table~\ref{tabla_memory_uniform} and Table~\ref{tabla_memory_gauss}, we note the amount of memory used by the algorithm for the point sets with uniform and Gaussian distributions. This includes the amount of memory used for the node lists and convex hulls (optimistic and pessimistic). We can also note that the required ranges between 15Kb and 47Kb. Furthermore, the memory is not affected when we increase the side of the point sets generated. In Figure~\ref{fig_graficos_mem}, it is shown that the amount of memory required has a similar behavior, independently of the type of intersection of the MBRs and the type of distribution. 

\begin{table}[]
	\centering
	\begin{tabular}{|c|c|c|c|c|c|c|c|c|}
		\hline
		\multirow{4}{*}{\begin{tabular}{r}Size of point sets \\ (in millions)\end{tabular}} & \multicolumn{8}{c|}{Type of intersection}                                   \\ \cline{2-9} 
                      & \multicolumn{4}{c|}{Corner}          & \multicolumn{4}{c|}{Side}            \\ \cline{2-9} 
                      & \multicolumn{4}{c|}{\% of intersection} & \multicolumn{4}{c|}{\% of intersection} \\ \cline{2-9} 
                      & 1\%      & 5\%     & 10\%   & 50\%   & 1\%     & 5\%     & 10\%    & 50\%   \\ \hline
		\hline
		1                     & 24     & 24    & 25   & 26   & 26    & 27    & 27    & 29   \\ \hline
		2                     & 42     & 41   & 46   & 46   & 44    & 43    & 43    & 43   \\ \hline
		5                     & 9    & 11    & 9   & 11   & 14 & 12   & 14   & 12   \\ \hline
		10                    & 17     & 16    & 17  & 21   & 18    & 19    & 18     & 17   \\ \hline
	\end{tabular}
	\caption{\small{Memory usage (in Kb) in sets with uniform distribution.}}
	\label{tabla_memory_uniform}
\end{table}

\begin{table}[ht]
	\centering
	\begin{tabular}{|c|c|c|c|c|c|c|c|c|}
		\hline
		\multirow{4}{*}{\begin{tabular}{r}Size of point sets \\ (in millions)\end{tabular}} & \multicolumn{8}{c|}{Type of intersection}                                   \\ \cline{2-9} 
                      & \multicolumn{4}{c|}{Corner}            & \multicolumn{4}{c|}{Side}                \\ \cline{2-9} 
                      & \multicolumn{4}{c|}{\% of intersection} & \multicolumn{4}{c|}{\% of intersection} \\ \cline{2-9} 
                      & 1\%     & 5\%     & 10\%    & 50\%   & 1\%     & 5\%     & 10\%    & 50\%   \\ \hline
        \hline
		1                     & 26    & 23    & 25    & 23   & 33    & 26    & 28    & 25   \\ \hline
		2                     & 44   & 44    & 44    & 43   & 47    & 45    & 43    & 43   \\ \hline
		5                     & 8    & 8    & 9   & 8   & 15 & 7 & 9    & 7      \\ \hline
		10                    & 19    & 17    & 17    & 17  & 19    & 20       & 17    & 14     \\ \hline
	\end{tabular}
	\caption{\small{Memory usage in sets with Gaussian distribution.}}
	\label{tabla_memory_gauss}
\end{table}

\begin{figure}[ht]
	\centering
	\subfloat[\small{Corner, 1\% of intersection.}]{
		\includegraphics[scale = 0.53]{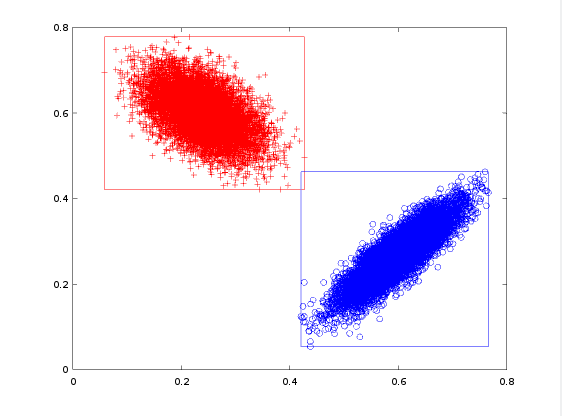}
	}
	\subfloat[\small{Corner, 5\% of intersection.}]{
		\includegraphics[scale = 0.53]{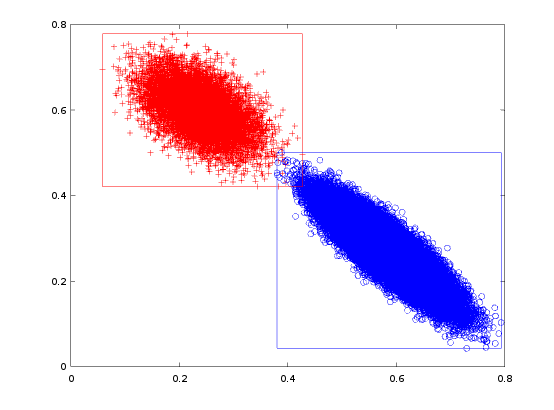}
	}
	\caption{\small{Synthetic data with a Gaussian distribution.}} 
	\label{fig_graficos_mbr_gauss}
\end{figure}

\begin{figure}[ht]
	\centering
	\subfloat[\small{Sets with uniform distribution and corner intersection.}]{
		\includegraphics[scale = 0.4]{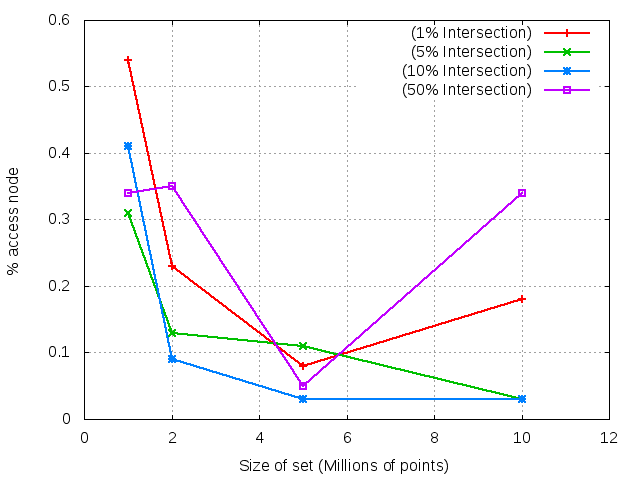}
        \label{fig:corner-uniforme}
	}\hspace{0.3cm}
	\subfloat[\small{Sets with uniform distribution and side intersection.}]{
		\includegraphics[scale = 0.4]{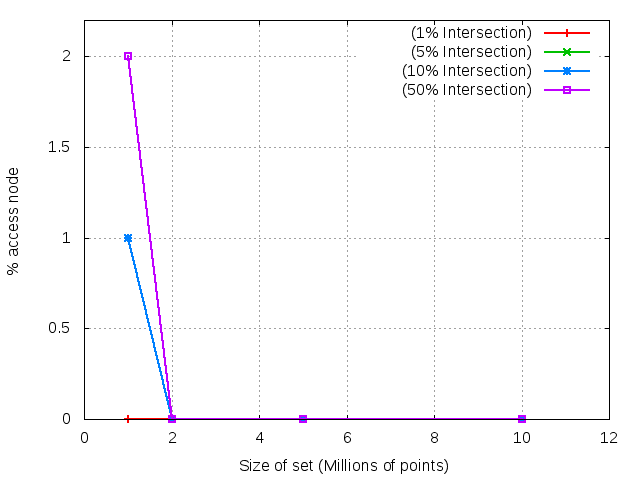}
        \label{fig:side-uniforme}
	}\\
	\subfloat[\small{Sets with Gaussian distribution and corner intersection.}]{
		\includegraphics[scale = 0.4]{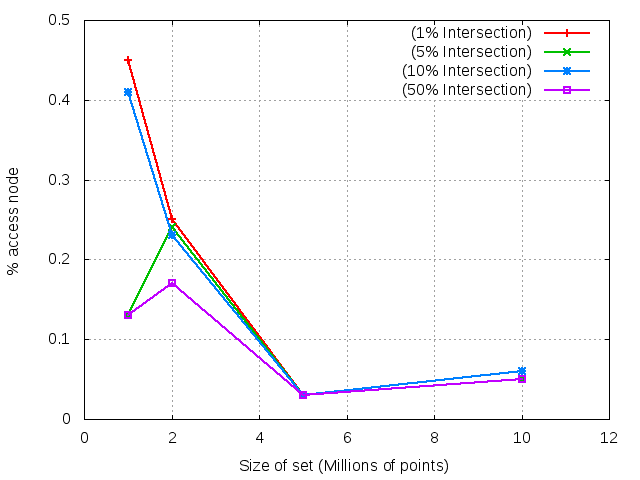}
        \label{fig:corner-gauss}
	}\hspace{0.3cm}
	\subfloat[\small{Sets with Gaussian distribution and side intersection.}]{
		\includegraphics[scale = 0.4]{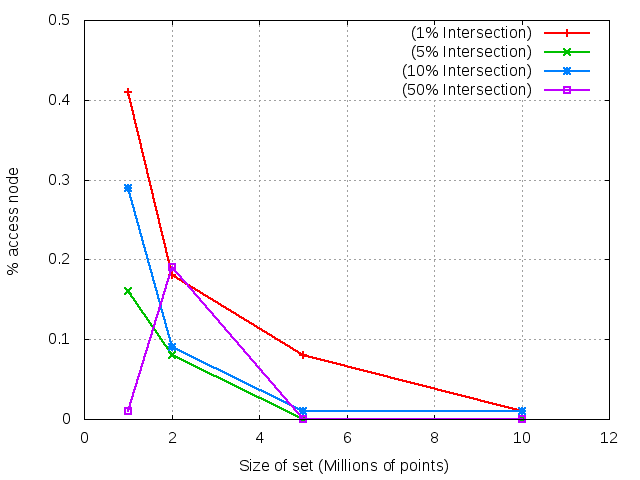}
        \label{fig:side-gauss}
	}
    %\\
	%\subfloat[Datos reales vs datos sintéticos con distribución gaussiana]{
	%	\includegraphics[scale = 0.32]{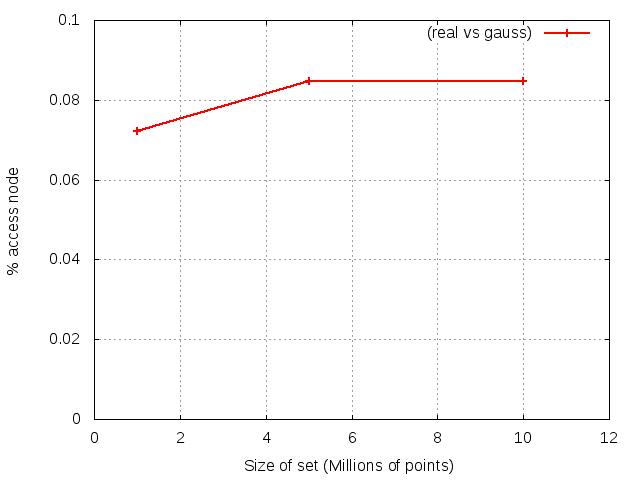}
	%}
	\caption{\small{Percentage of nodes accessed.}} 
	\label{fig_graficos}
\end{figure}

\begin{figure}[ht]
	\subfloat[\small{Uniform distribution, corner intersection.}]{
		\includegraphics[scale = 0.4]{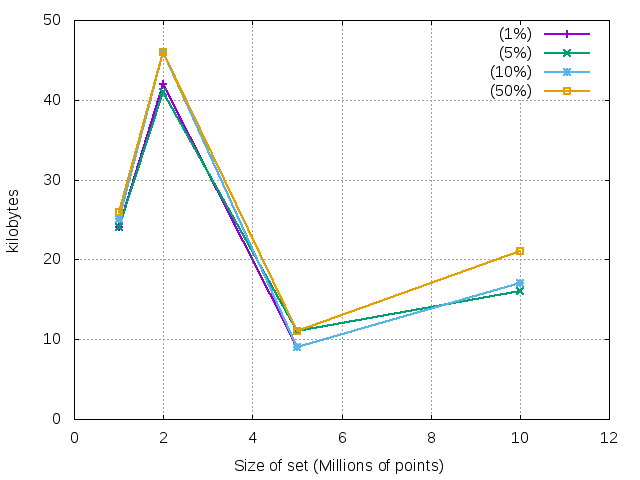}
	}\hspace{0.3cm}
	\subfloat[\small{Uniform distribution, side intersection.}]{
		\includegraphics[scale = 0.4]{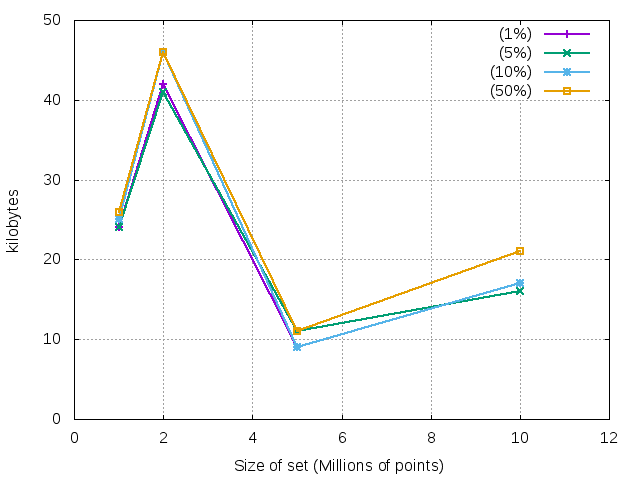}
	}\\
	\subfloat[\small{Gaussian distribution, corner intersection.}]{
		\includegraphics[scale = 0.4]{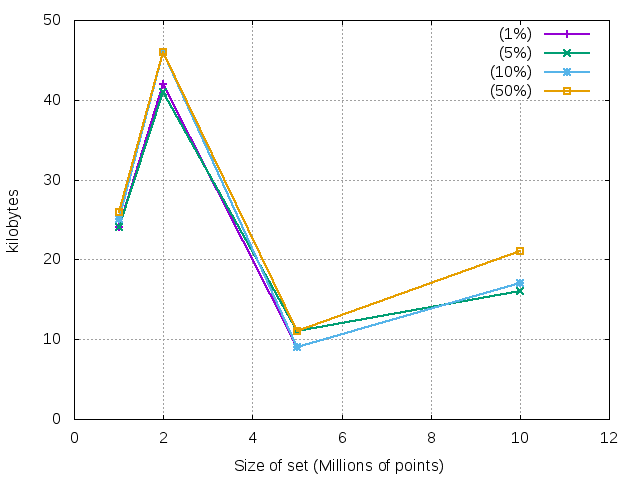}
	}\hspace{0.3cm}
	\subfloat[\small{Gaussian distribution, side intersection.}]{
		\includegraphics[scale = 0.4]{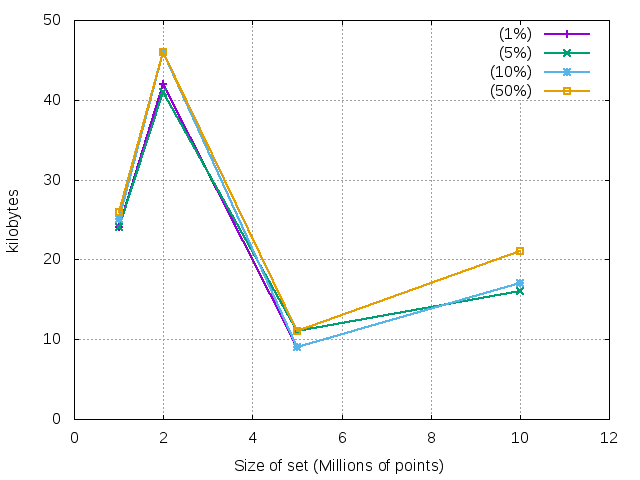}
	}
	\caption{\small{Memory usage.}} 
	\label{fig_graficos_mem}
\end{figure}

\section{Conclusions}\label{sec:conclusions}

In this paper, we have proposed an algorithm to decide the linear separability of two point sets of cardinalities $n$ and $m$, respectively, both sets stored in a different R-tree. The algorithm takes advantage of the properties of the R-trees in order to access as less nodes as possible. The running time complexity in the worst case is within $O(m \log m + n \log n)$. With the goal of evaluating the performance of the algorithm in practice, we designed several experiments with both real and synthetic point sets, and an implementation of the algorithm was run in each experiment. The results of the experiments showed that the algorithm performs few accesses to disk (i.e.\ accesses to nodes of the R-trees), uses a small amount of RAM memory and a low computation time.

Our algorithm expands the use of the R-trees, a multidimensional data structure well used in several spatial database systems such as Postgres and Oracle. According to the bibliography review, and to the best of our knowledge, this is the first algorithm that tackles the geometric separability of massive spatial object sets stored in secondary storage data structures. 

For future work, we propose the study of other types of geometric separability problems when the input is given in R-trees or other secondary-storage spatial data structures, for example separating red and blue points by axis parallel rectangles, wedges, or constrained polylines. We also propose to design an extension of this algorithm to work in dimensions higher that two.

\section*{Acknowledgements}

C.~T.\ was supported by CONICYT scholarship PCHA/MagisterNacional/2015-22151665 of the Government of Chile and the research group Bases de Datos 132019 GI/EF funded by Universidad del B\'io-B\'io (Chile).\\
P.~P-L.\ was supported by project Millennium Nucleus Information and Coordination in Networks ICM/FIC RC130003 (Chile).\\
G.~G.\ was supported by the research group Bases de Datos 132019 GI/EF, and the research project  Dise\~no e Implementaci\'on de Algoritmos Geom\'etricos en el contexto de Bases de Datos Espaciales 142719 3/R, both funded by Universidad del B\'io-B\'io (Chile).           

\small

\bibliographystyle{abbrv}
\bibliography{refs}

\end{document}